\documentclass[journal]{IEEEtran}

\usepackage{mathtools,bm}
\usepackage{amssymb}
\usepackage{booktabs}
\usepackage{amsthm}
\usepackage{algorithm}
\usepackage{algpseudocode}
\usepackage[capitalize,noabbrev]{cleveref}

\theoremstyle{plain}
\newtheorem{proposition}{Proposition}
\newtheorem{lemma}{Lemma}
\newtheorem{corollary}{Corollary}

\theoremstyle{definition}
\newtheorem{definition}{Definition}

\theoremstyle{remark}
\newtheorem{remark}{Remark}

\begin{document}

\title{Designed-Source Reductions and a Dual-Purpose Feasibility Band for Semantic Rate-Distortion}

\author{Joss~Armstrong\thanks{The author is with Ericsson Research, Athlone, Ireland (e-mail: joss.armstrong@ericsson.com). ORCID: 0009-0009-3462-9679.}}

\markboth{Preprint}{Armstrong: Designed-Source Reductions and a Dual-Purpose Feasibility Band}

\maketitle

\begin{abstract}
The joint rate-distortion framework of Stavrou and Kountouris
(SK)~\cite{sk2023} characterises dual-fidelity tradeoffs for semantic
communication on stochastic semantic sources. Many task-oriented
communication systems instead use designed sources, where the
semantic object is a deterministic oracle allocation $\phi^*(t)$
rather than a stochastic quantity given by nature. We isolate the
subclass of designed sources under smooth concave utility with
assumptions A1, A2 and Euclidean allocation codomain, and restrict
the encoder class to deterministic common-category mappings. Within
this subclass the SK exponential-tilting decoder and generalised
Blahut--Arimoto iteration specialise to conditional-mean decoding
and Lloyd--Max stationarity on $\phi^*(t)$. When the second fidelity
is a monotone single-letter distortion, the joint problem stays
inside the SK admissible class; the common-category SK rate is
lower-bounded by the max of the corresponding Shannon rate-distortion
functions, with equality only when the common-category reconstruction
is compatible and RDF-optimal. When the second
fidelity is aggregate verification, the joint problem leaves the SK
single-letter class and admits a constrained-design feasibility band
$R_{\min}(\varepsilon^*) \leq R \leq R_{\max}(\beta^*)$ of width
$\log_2(K_{\max}/K_{\min})$ bits in partition cardinality. The
reduction and the band are scope statements on the SK apparatus, not
modifications to it. A smart-grid economic-dispatch example with a
non-technical-loss-detection contrast illustrates the band.
\end{abstract}

\begin{IEEEkeywords}
Rate-distortion theory, semantic communication, task-oriented quantization, sufficient statistics, mechanism design, feasibility band.
\end{IEEEkeywords}

\section{Introduction}\label{sec:intro}

Stavrou and Kountouris~\cite{sk2023} characterise the joint
rate-distortion tradeoff between a semantic
distortion $D_s$ and an observation distortion $D_o$ for goal-oriented
semantic communication, an information-theoretic instance of the
broader task-oriented communications agenda surveyed by G\"{u}nd\"{u}z
\emph{et al.}~\cite{gunduz2023}. Their analysis develops an exponential-tilting
parametric decoder, a generalised Blahut-Arimoto iteration coupling the
two reconstruction marginals, and a pair of Markov-chain hypotheses
under which the joint rate-distortion function admits the closed form
$R(D_s, D_o) = \max\{R(D_s), R(D_o)\}$. The machinery is necessary for
stochastic semantic sources, where the semantic object is given by
nature rather than chosen by the designer.

A substantial portion of the comms-applied workload motivating
goal-oriented quantisation does not match that hypothesis. The
resource-allocation and power-scheduling mechanisms
of~\cite{zou2023,sun2024} choose a deterministic allocation function
under smooth concave utility, so the semantic object is an oracle
action $\phi^*(t)$ that is a known function of the observation. When
such a mechanism faces a second fidelity requirement, for instance
meter-level fraud detection~\cite{leite2018,messinis2018,buzau2019}
alongside dispatch efficiency in a smart-grid distribution network,
the SK two-distortion
characterisation is the natural theoretical home. This paper analyses
that subclass.

The result is a structural theorem about which subclass of
two-distortion problems sits inside the SK admissible class and which
subclass leaves it. We pin down a subclass (designed sources under A1,
A2, Euclidean allocation codomain) in which the SK exponential-tilting
decoder and the generalised Blahut--Arimoto iteration reduce to
textbook objects~\cite{shlezinger2019,strinati2021,zou2023,sun2024}, and
the Markov-chain hypotheses become automatic under the common-category
injectivity condition, and we
identify the regime within that subclass in which the SK
max-decomposition $R(D_s, D_o) = \max\{R(D_s), R(D_o)\}$ does not
apply because the joint design problem leaves the SK admissible
class. The four propositions and two corollaries instantiate that
scope statement.

Within the
designed-source subclass (\Cref{sec:designed-source}: assumptions
A1--A2 on the utility $U$, deterministic
$\phi^*(t) = \arg\max_r U(t, r)$),
\Cref{prop:task-loss,prop:decoder,prop:encoder} (\Cref{sec:reduction})
reduce the SK machinery to short statements, all stated on the
deterministic common-category encoder class of \Cref{sec:mapping}:
SK's exponential-tilting parametric solution becomes the conditional
mean on the squared-oracle-error surrogate (exact for isotropic
quadratic utility), SK's generalised Blahut-Arimoto specialises to
Lloyd-Max on the oracle action under this restricted architecture,
and the task loss is sandwiched by the within-category variance up to
A1/A2 curvature constants. \Cref{prop:opposing} (\Cref{sec:opposing}) identifies a complementary
regime: the two design targets \emph{oppose}
each other when a single $K$-way partition is load-bearing for both
allocation and aggregate verification, so the joint tradeoff is a
constrained-design band
$R_{\min}(\varepsilon^*) \leq R \leq R_{\max}(\beta^*)$
(equivalently $K_{\min} \leq K \leq K_{\max}$ under uniform assignment)
rather than $\max\{R(D_s), R(D_o)\}$.
Opposing monotonicity arises because finer categories improve
coordination while degrading verification, a structure generic to
designed mechanisms whose system-level partition is load-bearing for
both coordination and verification.
\Cref{cor:no-gap,cor:rate-gap} pin down the relationship to the SK
rate along a monotonicity axis. When the verification-side mapping
$R \mapsto D_o(R)$ is monotone decreasing in rate, as a standard
rate-distortion function requires, \Cref{cor:no-gap} specialises the
SK characterisation under designed-source mapping to the max of two
standard Shannon rate-distortion functions, on $\phi^*(T)$ and on $T$
respectively, under compatible common-category reconstructions
(\Cref{cor:no-gap}); the inequality $\geq$ holds in general, and in
the squared-oracle-error case used in the numerical illustrations the
two restricted deterministic marginal problems coincide, while equality
with the unrestricted Shannon RDFs additionally requires the common
finite-$K$ encoder to be RDF-optimal. When the verification-side mapping is monotone increasing in
rate, as it is under aggregate testing with pool-size-driven detection,
the joint problem sits outside the SK admissible class;
\Cref{cor:rate-gap} replaces the max-decomposition with the
constrained-design band of \Cref{prop:opposing}, of width
$R_{\max} - R_{\min} = \log_2(K_{\max}/K_{\min})$ bits when non-empty.

\Cref{sec:lzp} compares the reduction with the
Gaussian-semantic-source reverse-water-filling result of Liu, Shao,
Zhang and Poor~\cite{lszp2022}. \Cref{sec:numerical} runs a comms-side
worked-example pair: smart-grid economic dispatch on the
designed-source side illustrates that the two SK distortions
trace a single $R(\varepsilon)$ curve up to the
\Cref{prop:task-loss} sandwich, and a non-technical-loss-detection
contrast~\cite{leite2018} on an unstructured source shows that no such
reduction occurs when no oracle action exists. \Cref{sec:ib-relation}
positions the reduction against the information-bottleneck
literature~\cite{tishby1999ib,strouse2017dib,goldfeld2020ib} and the
indirect-rate-distortion line~\cite{dobrushin1962,wolf1970,witsenhausen1980,lszp2022}: the
designed-source setting maps to deterministic IB with the $K$-means
limit~\cite{strouse2019ib} instantiated on the transformed samples
$\{\phi^*(t_i)\}$ rather than the raw types $\{t_i\}$, and
Witsenhausen's indirect-rate-distortion construction underlies
\Cref{cor:no-gap}. \Cref{sec:scope} delimits the reduction: general
stochastic-source SK, adversarial or strategic SK variants, non-smooth
or non-concave utility, and non-Euclidean allocation codomains all
remain outside scope and require the full SK machinery or a different
reduction.

\paragraph{Mechanism-design context.}
We do not develop the mechanism-design implications here; the result is
stated at the level of the rate-distortion characterisation, and the
oracle allocation $\phi^*$ is treated as given.

\section{Preliminaries}\label{sec:prelim}

\subsection{The SK Framework}\label{sec:sk-recap}

A memoryless pair $(x, z)$ has joint distribution $p(x, z)$ over finite
alphabets $\mathcal{X} \times \mathcal{Z}$, with reconstruction
alphabets $\hat{\mathcal{X}}, \hat{\mathcal{Z}}$. Here $x$ is the
semantic source and $z$ is the observation at the encoder. Two
distortion measures $d_s$ (semantic) and $d_o$ (observation) yield the
joint rate-distortion function~\cite[Lemma~1]{sk2023}:
\begin{equation}\label{eq:sk-rd}
R(D_s, D_o) = \inf_{\substack{q(\hat{z}, \hat{x} \mid z):\;
  \mathbb{E}[d_s] \leq D_s,\;
  \mathbb{E}[d_o] \leq D_o}} I(z;\, \hat{z}, \hat{x}).
\end{equation}
SK characterise~\eqref{eq:sk-rd} through three results that the
designed-source reduction will repeatedly contrast against.

\paragraph{Max-decomposition under Markov-chain hypotheses.}
By the chain rule of mutual information,
$I(z; \hat{z}, \hat{x}) = I(z; \hat{x}) + I(z; \hat{z} \mid \hat{x})
 = I(z; \hat{z}) + I(z; \hat{x} \mid \hat{z})$,
so non-negativity of conditional mutual information gives the lower
bound $I(z; \hat{z}, \hat{x}) \geq \max\{I(z; \hat{x}), I(z; \hat{z})\}$.
SK's Lemma~2 records that this bound is tight,
\begin{equation}\label{eq:sk-max}
R(D_s, D_o) \;=\; \max\{R(D_s),\, R(D_o)\},
\end{equation}
if and only if the two Markov chains
\begin{equation}\label{eq:sk-markov}
z - \hat{x} - \hat{z}
\qquad\text{and}\qquad
z - \hat{z} - \hat{x}
\end{equation}
hold concurrently~\cite[Lemma~2]{sk2023}. The individual
rate-distortion functions in~\eqref{eq:sk-max} are
$R(D_s) = \min_{q(\hat{x} \mid z):\, \mathbb{E}[d_s] \leq D_s}
  I(z; \hat{x})$
and
$R(D_o) = \min_{q(\hat{z} \mid z):\, \mathbb{E}[d_o] \leq D_o}
  I(z; \hat{z})$.
The hypotheses in~\eqref{eq:sk-markov} are non-trivial. They require
that the reconstruction pair $(\hat{z}, \hat{x})$ encodes $z$ through
either component alone.

\paragraph{Exponential-tilting parametric optimiser.}
Let $s_1, s_2 \leq 0$ be the Lagrange multipliers associated with the
two distortion constraints, and let $\nu^*(\hat{z}, \hat{x})$ be the
joint output marginal at the optimum. SK's Theorem~1 gives the implicit
parametric form of the conditional optimiser~\cite[Theorem~1]{sk2023}:
\begin{equation}\label{eq:sk-tilting}
q^*(\hat{z}, \hat{x} \mid z)
\;=\;
\frac{e^{s_1 d_s(z, \hat{x}) + s_2 d_o(z, \hat{z})}\,
      \nu^*(\hat{z}, \hat{x})}
     {\lambda(z)},
\end{equation}
where
$\lambda(z) = \sum_{\hat{z}, \hat{x}}
  e^{s_1 d_s(z, \hat{x}) + s_2 d_o(z, \hat{z})}\,
  \nu^*(\hat{z}, \hat{x})$.
The rate at the optimum is
$R(D_s^*, D_o^*) = s_1 D_s^* + s_2 D_o^*
 - \sum_z p(z) \log \lambda(z)$.
The two multipliers and the joint marginal $\nu^*$ are coupled through
the exponential tilt over the pair $(\hat{z}, \hat{x})$, so neither
component can be solved in isolation.

\paragraph{Generalised Blahut-Arimoto iteration.}
The optimiser~\eqref{eq:sk-tilting} is computed by an alternating
update on the joint output marginal. Writing
$A(z, \hat{z}, \hat{x}) = e^{s_1 d_s(z, \hat{x}) + s_2 d_o(z, \hat{z})}$,
SK's Theorem~4 gives the iteration~\cite[Theorem~4]{sk2023}
\begin{equation}\label{eq:sk-gba}
\nu^{(k+1)}(\hat{z}, \hat{x})
= \nu^{(k)}(\hat{z}, \hat{x})
\sum_{z} \frac{p(z)\, A(z, \hat{z}, \hat{x})}
              {\sum_{\hat{z}', \hat{x}'} A(z, \hat{z}', \hat{x}')\,
               \nu^{(k)}(\hat{z}', \hat{x}')}.
\end{equation}
The iteration is over the joint $(\hat{z}, \hat{x})$ marginal because
the exponential weight $A$ couples the two distortions; it does not
decompose into separate updates on $\nu(\hat{z})$ and $\nu(\hat{x})$.

\paragraph{Continuous and finite alphabets.}
SK state the finite-alphabet formulation. The continuous examples used
below admit two readings. Either they are standard Polish-space
extensions of rate-distortion theory under measurable distortions, or
they are empirical finite-sample quantisations of the same problem.
The formal SK comparison in \Cref{cor:no-gap} is stated under the
Polish-space measurability assumptions of \Cref{lem:indirect}.

\subsection{Designed-Source Mechanism Class}\label{sec:designed-source}

A coordinator manages a resource pool on behalf of agents with demand
types $t \sim F$ over $\mathcal{T} \subseteq \mathbb{R}^d$. The utility
$U: \mathcal{T} \times \mathbb{R}^r \to \mathbb{R}$ satisfies:
\begin{itemize}
\item[\textbf{A1.}] $\beta_U$-smoothness:
  $\nabla^2_{rr} U(t, r) \succeq -\beta_U I$ for all $(t, r)$, with
  $\beta_U > 0$.
\item[\textbf{A2.}] $\alpha$-strong concavity:
  $\nabla^2_{rr} U(t, r) \preceq -\alpha I$ for all $(t, r)$, with
  $0 < \alpha \leq \beta_U$.
\end{itemize}
The constants are positive (a non-degenerate curvature sandwich
requires both bounds active) and $\alpha \leq \beta_U$ is automatic,
since $-\beta_U I \preceq \nabla^2_{rr} U \preceq -\alpha I$ forces
$-\beta_U \leq -\alpha$. The action codomain $\mathbb{R}^r$ is an
unconstrained convex set, so the oracle optimum
$\phi^*(t) = \arg\max_r U(t, r)$ is interior and the first-order
condition $\nabla_r U(t, \phi^*(t)) = 0$ holds. Smooth concavity is
the standard modelling assumption for resource allocation in convex
analysis~\cite{boyd2004}; together with the oracle action $\phi^*$
defined below, A1 and A2 specify the designed-source mechanism class
studied in this paper. The
subscript on $\beta_U$ distinguishes the curvature constant from the
detection-power symbols $\beta_{\mathrm{agg}}, \beta^*$ introduced in
\Cref{sec:opposing}.

The oracle allocation is $\phi^*(t) = \arg\max_r U(t, r)$. A category
signal $c: \mathcal{T} \to \{1, \ldots, K\}$ induces the category
allocation $\phi(k) = \mathbb{E}[\phi^*(t) \mid c(t) = k]$ and
within-category variance
$\varepsilon = \mathbb{E}[\|\phi^*(t) - \phi(c(t))\|^2]$.

\subsection{Variable Mapping}\label{sec:mapping}

The SK observation $z$ maps to the demand type $t$. The semantic source
$x$ maps to the oracle allocation $\phi^*(t)$, which is deterministic in
$t$. The semantic distortion $d_s$ maps to welfare loss, and the
observation distortion $d_o$ maps to detection degradation (the
pool-size-driven aggregate-test loss developed in \Cref{sec:opposing}).
We distinguish two regimes for this $d_o$ mapping. In the SK-admissible
monotone regime of \Cref{cor:no-gap}, $d_o$ is a single-letter
observation distortion on $T$. In the aggregate-verification regime of
\Cref{sec:opposing}, the detection loss is instead a pool-level side
constraint, not an SK single-letter distortion; we retain the SK
notation as a point of comparison rather than as a claim that the joint
problem stays inside the SK admissible class.
The reconstruction $\hat{x}$ maps to the category allocation
$\phi(c)$, and the rate $R$ maps to the information budget
$R_H := H(C) \leq R_K := \log_2 K$. The observation-side reconstruction is the category
index itself, $\hat{z} = c(t)$, which the aggregate test uses to pool
meters of size $m_k = M/K$ within category $k$. Both $\hat{x} = \phi(c)$
and $\hat{z} = c$ are deterministic functions of $c$, with $\hat{x}$
determined by $\hat{z}$ via $\hat{x} = \phi(\hat{z})$. When the
category-allocation map $\phi$ is injective, $\hat{z}$ is in turn
determined by $\hat{x}$ via $\hat{z} = \phi^{-1}(\hat{x})$; this
non-degeneracy holds generically for continuous $T$ with Lloyd--Max
categorisation and is assumed throughout.

The structural specialisation is that $x = \phi^*(z)$ is a known
deterministic function, not a stochastic source. This is natural when
the designer chooses $\phi^*$.

Three rate quantities recur and should be kept distinct. The
partition-cardinality rate is $R_K := \log_2 K$, the number of bits
required to index a $K$-way partition. The entropy rate is
$R_H := H(C)$, the entropy of the category indicator under the source
distribution. The Shannon rate-distortion function is
$R_{\mathrm{Sh}}(D)$, the information-theoretic floor at distortion
target $D$ on the underlying source. For a category representation $C$
whose reconstruction attains distortion $D$, these satisfy
$R_{\mathrm{Sh}}(D) \leq R_H \leq R_K$, with $R_H = R_K$ iff $C$ is
uniform on $\{1, \ldots, K\}$ and $R_{\mathrm{Sh}}(D) = R_H$ when the
induced category representation attains the Shannon RDF at distortion
$D$. We retain the bare symbol $R$ for the SK rate variable in
equations where the surrounding text fixes the intended specialisation:
lower endpoints of cardinality-level outer bands use the deterministic
cardinality threshold $R_{\min} := \log_2 K_{\min}$ induced by the
welfare-side quantization envelope (\Cref{prop:opposing}), while
$R_{\mathrm{Sh}}$ on the welfare-side source $\phi^*(T)$ gives the
corresponding Shannon-RDF floor or lower benchmark; upper endpoints
invoke $R_K$ at the
verification-driven cardinality, and the SK rate
$R_{\mathrm{SK}}^{\mathrm{cc}}$ in \Cref{cor:no-gap,cor:rate-gap} is
the SK rate~\eqref{eq:sk-rd} computed on the common-category encoder
class of \Cref{sec:mapping}, which equals $R_H$ under that
architecture and equals $R_K$ further under
\Cref{cor:rate-gap}'s uniform-assignment hypothesis. Because
$R_{\mathrm{SK}}^{\mathrm{cc}}$ is computed on this restricted
common-category architecture, it is an architectural specialisation of
the SK problem rather than the unrestricted SK rate; the corollaries'
claims attach to this restricted rate.
\Cref{tab:rates} summarises the four rate notions, their meanings,
equality conditions, and where they are used.

\begin{table}[t]
\caption{Rate notions used in this paper.}
\label{tab:rates}
\centering
\begin{tabular}{@{}p{0.14\linewidth}p{0.28\linewidth}p{0.22\linewidth}p{0.26\linewidth}@{}}
\toprule
Symbol & Meaning & Equality conditions & Where used \\
\midrule
$R_K := \log_2 K$ &
  partition-cardinality rate; bits to index a $K$-way partition &
  $R_K = R_H$ iff $C$ uniform &
  upper endpoint of band; \Cref{cor:rate-gap} under uniform
  assignment \\
$R_H := H(C)$ &
  entropy rate of the category indicator $C$ &
  $R_H = R_K$ iff $C$ uniform; $R_H \geq R_{\mathrm{Sh}}$ &
  SK rate on common-category architecture (see
  $R_{\mathrm{SK}}^{\mathrm{cc}}$ below) \\
$R_{\mathrm{Sh}}(D)$ &
  Shannon rate-distortion function; info-theoretic floor at
  distortion $D$ on the underlying source &
  $R_{\mathrm{Sh}}(D) = R_H$ when the induced category representation
  attains the Shannon RDF at distortion $D$ &
  Shannon-RDF floor/benchmark for the welfare-side endpoint;
  marginal lower bounds of \Cref{cor:no-gap}
  on $\phi^*(T)$ and on $T$ \\
$R_{\mathrm{SK}}^{\mathrm{cc}}$ &
  SK rate~\eqref{eq:sk-rd} restricted to the common-category
  encoder class of \Cref{sec:mapping} &
  $= R_H$ on this architecture; $= R_K$ further under uniform
  population assignment &
  \Cref{cor:no-gap,cor:rate-gap} \\
\bottomrule
\end{tabular}
\end{table}

\section{Three Reduction Propositions}\label{sec:reduction}

\Cref{prop:task-loss,prop:decoder,prop:encoder} record three places
where the SK apparatus reduces, under the designed-source mapping, to a
definition or a short proof.

All three propositions below, together with
\Cref{cor:no-gap,cor:rate-gap}, attach to the common-category
architecture of \Cref{sec:mapping}: deterministic finite-$K$ encoders
with $\hat{z} = c$ and $\hat{x} = \phi(c)$, $\phi$ injective. They are
architectural-specialisation results, not statements about the
unrestricted SK problem; in particular the rate
$R_{\mathrm{SK}}^{\mathrm{cc}}$ throughout is computed on this
restricted class.

\begin{proposition}[Task loss]\label{prop:task-loss}
Under A1--A2, for any category assignment $c$ with category allocation
$\phi(k) = \mathbb{E}[\phi^*(t) \mid c(t) = k]$,
\begin{equation}\label{eq:sandwich}
\frac{\alpha}{2}\,\varepsilon
\;\leq\; \mathbb{E}\bigl[U(t, \phi^*(t)) - U(t, \phi(c(t)))\bigr]
\;\leq\; \frac{\beta_U}{2}\,\varepsilon.
\end{equation}
\end{proposition}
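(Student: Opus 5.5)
The plan is a pointwise second-order Taylor expansion of $U(t,\cdot)$ around the oracle action $\phi^*(t)$, followed by taking expectations. Fix $t$ and write $\Delta(t) := \phi(c(t)) - \phi^*(t) \in \mathbb{R}^r$. Because the codomain is all of $\mathbb{R}^r$, the oracle optimum is interior and the first-order condition $\nabla_r U(t,\phi^*(t)) = 0$ holds (\Cref{sec:designed-source}). Taylor's theorem with integral remainder along the segment $r(s) = \phi^*(t) + s\Delta(t)$, $s\in[0,1]$, then gives
\begin{equation*}
U(t,\phi^*(t)) - U(t,\phi(c(t))) = -\int_0^1 (1-s)\,\Delta(t)^\top \nabla^2_{rr}U\bigl(t, r(s)\bigr)\,\Delta(t)\,ds .
\end{equation*}
The linear term vanishes by the first-order condition.

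Next I apply the curvature sandwich A1--A2 inside the integral. Since $-\beta_U I \preceq \nabla^2_{rr}U \preceq -\alpha I$ holds for every $(t,r)$, in particular along the segment, the integrand satisfies $\alpha\|\Delta\|^2 \le -\Delta^\top\nabla^2_{rr}U\,\Delta \le \beta_U\|\Delta\|^2$. Integrating against the weight $(1-s)$, whose integral over $[0,1]$ is $1/2$, yields the pointwise bound
\begin{equation*}
\frac{\alpha}{2}\,\|\phi^*(t)-\phi(c(t))\|^2 \le U(t,\phi^*(t)) - U(t,\phi(c(t))) \le \frac{\beta_U}{2}\,\|\phi^*(t)-\phi(c(t))\|^2 .
\end{equation*}
This is the standard strong-concavity/smoothness sandwich around a stationary point.

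Finally I take expectations over $t\sim F$. By the definition in \Cref{sec:designed-source}, $\mathbb{E}\|\phi^*(t)-\phi(c(t))\|^2 = \varepsilon$, so \eqref{eq:sandwich} follows by monotonicity and linearity of expectation. The specific choice $\phi(k) = \mathbb{E}[\phi^*(t)\mid c(t)=k]$ is not actually needed for the inequality. It only makes $\varepsilon$ the within-category variance, and the bound holds for any reconstruction $\phi(c(t))$ once $\varepsilon$ is read as the corresponding mean squared oracle error.

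The only delicate step is integrability and measurability: the expectations must exist, and the integral-remainder form requires $U(t,\cdot)$ to be $C^2$, which A1--A2 implicitly assume by referring to $\nabla^2_{rr}U$. I would state that finiteness of $\varepsilon$ (a finite second moment of $\phi^*(T)$) is assumed. Given that, the upper bound shows the welfare loss is integrable, and the lower bound shows it is nonnegative, so there is no $\infty-\infty$ issue. Measurability of $t\mapsto\phi^*(t)$ follows because $\phi^*$ is the unique maximiser of a strongly concave function and depends continuously on $t$ under joint continuity of $\nabla_r U$. I would record this as a one-line remark rather than belabour it.
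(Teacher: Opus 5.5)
Your proposal is correct and follows the paper's proof essentially step for step: a second-order expansion of $U(t,\cdot)$ about $\phi^*(t)$, the linear term killed by the first-order condition, the A1--A2 Hessian sandwich giving the pointwise bound, then expectation over $t\sim F$. The differences are cosmetic: you use the integral form of the remainder where the paper uses the mean-value form at some $\tilde r$ on the segment. You also add two correct side observations, namely that the conditional-mean choice of $\phi$ is not needed for the inequality and that finiteness of $\varepsilon$ takes care of integrability.
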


\begin{proof}
Fix $t$ and let $k = c(t)$. Expand $U(t, \cdot)$ around $\phi^*(t)$:
\begin{align*}
U(t, \phi(k)) &= U(t, \phi^*(t))
  + \underbrace{\nabla_r U(t, \phi^*(t))^{\!\top}}_{= 0}
    \!\bigl(\phi(k) - \phi^*(t)\bigr) \\
  &\quad+ \tfrac{1}{2}\bigl(\phi(k) - \phi^*(t)\bigr)^{\!\top}
    \nabla^2_{rr} U(t, \tilde{r})\,
    \bigl(\phi(k) - \phi^*(t)\bigr)
\end{align*}
for some $\tilde{r}$ on $[\phi^*(t), \phi(k)]$. The linear term
vanishes by the first-order condition. By A1--A2,
$-\beta_U I \preceq \nabla^2_{rr} \preceq -\alpha I$, giving
\begin{align*}
\tfrac{\alpha}{2}\|\phi^*(t) - \phi(k)\|^2
&\leq U(t, \phi^*(t)) - U(t, \phi(k)) \\
&\leq \tfrac{\beta_U}{2}\|\phi^*(t) - \phi(k)\|^2.
\end{align*}
Take expectations over $t \sim F$.
\end{proof}

\begin{remark}[Status of \Cref{prop:task-loss}]\label{rem:sandwich-status}
The sandwich~\eqref{eq:sandwich} is a standard quadratic
upper-and-lower bound from convex analysis~\cite[Section~9.1]{boyd2004}
applied at the optimum of $U(t, \cdot)$. The novelty is not the lemma
itself but its role in the present setting. Once the semantic source is
the deterministic oracle allocation $\phi^*(t)$, the SK semantic
distortion $\mathbb{E}[d_s(x, \hat{x})]$ is controlled by the
within-category variance $\varepsilon$ via the sandwich, and equals
it up to a constant under quadratic utility; the upper-bound side
of~\eqref{eq:sandwich} is what enables the encoder reduction in
\Cref{prop:encoder}.
\end{remark}

\begin{proposition}[Decoder]\label{prop:decoder}
For any fixed deterministic $K$-partition
$c: \mathcal{T} \to \{1,\ldots,K\}$, the conditional mean
$\phi(k) = \mathbb{E}[\phi^*(t) \mid c(t) = k]$
minimises $\varepsilon$ among all decoders
$\hat{\phi}: \{1,\ldots,K\} \to \mathbb{R}^r$. For \emph{isotropic}
quadratic utility ($\alpha = \beta_U$, equivalently
$\nabla^2_{rr} U = -\alpha I$), the conditional mean minimises exact
welfare loss and the sandwich~\eqref{eq:sandwich} is tight. More
generally, for a weighted-quadratic utility
$U(t, r) = -\tfrac{1}{2}(r - \phi^*(t))^{\!\top} W (r - \phi^*(t))$
with constant positive-definite weight matrix $W$, the same
conditional mean also minimises the exact weighted-quadratic welfare
loss, although the Euclidean sandwich constants $\alpha, \beta_U$ need
not coincide. If the quadratic weight varies with $t$ or with the
cell, the exact welfare-minimising decoder becomes the corresponding
weighted conditional mean, while the ordinary conditional mean
continues to minimise the squared-oracle-error surrogate
of~\eqref{eq:sandwich}.
\end{proposition}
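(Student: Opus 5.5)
The proof is a cell-by-cell bias--variance argument. For a fixed deterministic partition $c$, the objective $\varepsilon(\hat\phi) = \mathbb{E}[\|\phi^*(t) - \hat\phi(c(t))\|^2]$ splits by the law of total expectation as $\sum_k \Pr(c(t)=k)\,\mathbb{E}[\|\phi^*(t)-\hat\phi(k)\|^2 \mid c(t)=k]$. Each summand depends only on the single vector $\hat\phi(k)\in\mathbb{R}^r$, so the minimisation decouples across cells. Within cell $k$, write $\mu_k=\mathbb{E}[\phi^*(t)\mid c(t)=k]$. Adding and subtracting $\mu_k$, the cross term vanishes because $\mathbb{E}[\phi^*(t)-\mu_k\mid c(t)=k]=0$. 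This gives
\begin{equation*}
\mathbb{E}\bigl[\|\phi^*(t)-\hat\phi(k)\|^2 \mid c(t)=k\bigr]
= \mathbb{E}\bigl[\|\phi^*(t)-\mu_k\|^2 \mid c(t)=k\bigr] + \|\mu_k-\hat\phi(k)\|^2,
\end{equation*}
which is minimised uniquely at $\hat\phi(k)=\mu_k$ (up to null cells). I will need $\mathbb{E}\|\phi^*(t)\|^2<\infty$ so that $\mu_k$ exists and $\varepsilon$ is finite; I will state this as a standing integrability assumption. Cells of probability zero are irrelevant.

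For the isotropic case, $\nabla^2_{rr}U=-\alpha I$ makes the Taylor expansion in the proof of \Cref{prop:task-loss} exact with a constant Hessian. Hence $U(t,\phi^*(t))-U(t,\hat\phi(c(t)))=\tfrac{\alpha}{2}\|\phi^*(t)-\hat\phi(c(t))\|^2$ holds pointwise for an \emph{arbitrary} decoder $\hat\phi$, not only the conditional mean. Taking expectations, the welfare loss equals $\tfrac{\alpha}{2}\varepsilon(\hat\phi)$. It is therefore minimised by the same $\mu_k$, and both sides of~\eqref{eq:sandwich} coincide because $\alpha=\beta_U$.

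For constant $W\succ0$, the same decomposition works in the $W$-inner product $\langle u,v\rangle_W=u^\top W v$. The exact loss is $\tfrac12\mathbb{E}[(\phi^*-\hat\phi)^\top W(\phi^*-\hat\phi)]$. The cross term $\mathbb{E}[(\phi^*(t)-\mu_k)^\top W(\mu_k-\hat\phi(k))\mid c(t)=k]$ vanishes by linearity, since $W$ is constant. The loss therefore splits as a within-cell term plus $\tfrac12\|\mu_k-\hat\phi(k)\|_W^2$, which is again minimised at $\mu_k$. The sandwich constants become $\lambda_{\min}(W)$ and $\lambda_{\max}(W)$, and these need not be equal.

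The only place where real care is needed is the variable-weight case. If $W=W(t)$, the cross term no longer vanishes. I will instead take the first-order condition of the strictly convex quadratic $\hat\phi(k)\mapsto\tfrac12\mathbb{E}[(\phi^*-\hat\phi(k))^\top W(t)(\phi^*-\hat\phi(k))\mid c=k]$. This gives $\mathbb{E}[W(t)\mid k]\,\hat\phi(k)=\mathbb{E}[W(t)\phi^*(t)\mid k]$, so the minimiser is $\hat\phi(k)=\mathbb{E}[W\mid k]^{-1}\mathbb{E}[W\phi^*\mid k]$, the weighted conditional mean. The matrix $\mathbb{E}[W\mid k]$ is invertible because it is an average of positive-definite matrices, given integrability of $W$ and of $W\phi^*$. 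A cell-dependent $W_k$ is the special case where $W$ is constant on each cell, and then the minimiser reduces back to $\mu_k$. The surrogate claim is unaffected, because the first step never used $U$.
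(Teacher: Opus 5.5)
Your proposal is correct and follows essentially the paper's route: cell-wise MMSE for the surrogate, the constant-Hessian case of the \Cref{prop:task-loss} Taylor expansion for isotropic utility, and a per-cell optimality argument for constant $W$. Your $W$-norm bias--variance split is equivalent to the paper's first-order condition $W(\hat\phi(k)-\mu_k)=0$. You also fill in two things the paper's proof leaves implicit or unargued. First, you state that the exact identity holds for an arbitrary decoder, which the comparison needs, since \Cref{prop:task-loss} is stated only at the conditional mean. Second, you derive the variable-weight clause $\hat\phi(k)=\mathbb{E}[W\mid c=k]^{-1}\mathbb{E}[W\phi^*\mid c=k]$, and you correctly note that a cell-constant weight collapses this back to $\mu_k$.
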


\begin{proof}
The conditional mean minimises
$\varepsilon = \mathbb{E}[\|\phi^*(t) - \hat{\phi}(c(t))\|^2]$ by the
MMSE property. Welfare loss is bounded above and below by $\varepsilon$
via the sandwich~\eqref{eq:sandwich}, so minimising $\varepsilon$
minimises both bounds simultaneously; for $\alpha = \beta_U$ the
sandwich constants coincide and the welfare-loss minimisation is
exact. For a constant positive-definite weight matrix $W$, the
per-cell first-order condition for the weighted-quadratic loss,
$W\,(\hat{\phi}(k) - \mathbb{E}[\phi^*(t) \mid c(t) = k]) = 0$, gives
$\hat{\phi}(k) = \mathbb{E}[\phi^*(t) \mid c(t) = k]$ independently of
$W$, extending the exact-welfare claim to the non-isotropic
constant-$W$ case even though the sandwich constants need not
coincide.
\end{proof}

\begin{proposition}[Encoder]\label{prop:encoder}
Assume $\phi^*(t)$ has finite second moment under $p_T$ and that the
source distribution puts no mass on cell boundaries. Any locally
$\varepsilon$-optimal $K$-partition of $\mathcal{T}$ with all cells of
positive probability satisfies the Lloyd-Max stationarity
conditions~\cite{lloyd1982,max1960} applied to $\phi^*(t)$:
nearest-neighbour cells in $\phi^*$-space and centroid means
$\phi(k) = \mathbb{E}[\phi^*(t) \mid c(t) = k]$. These conditions are
necessary, not sufficient.
\end{proposition}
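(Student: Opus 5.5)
The argument has two halves, one for each Lloyd--Max condition: the centroid condition comes from the fixed-partition optimum, and the nearest-neighbour condition comes from a boundary-reassignment perturbation. Throughout, write $y = \phi^*(t)$ and let $Q = p_T \circ (\phi^*)^{-1}$ be the pushforward law of $Y = \phi^*(T)$ on $\mathbb{R}^r$. By the finite-second-moment hypothesis, $\varepsilon$ is finite for every partition and decoder. Local $\varepsilon$-optimality is understood jointly in the pair (partition, decoder): no small perturbation of the codebook, and no reassignment of a small-probability set of types, decreases $\varepsilon$.

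\emph{Centroid condition.} Fix the partition $c$. By \Cref{prop:decoder}, the decoder minimising $\varepsilon$ is $\hat{\phi}(k) = \mathbb{E}[\phi^*(t)\mid c(t)=k]$. This requires each cell to have positive probability, which is assumed, so the conditional expectation is well defined. If the codebook at a locally optimal point differed from these means, moving it towards them would strictly reduce $\varepsilon$. This is the standard decomposition $\mathbb{E}\|Y-a\|^2 = \mathbb{E}\|Y-\mathbb{E}Y\|^2 + \|\mathbb{E}Y - a\|^2$ applied cell by cell, which gives the centroid condition.

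\emph{Nearest-neighbour condition.} Now fix the codebook $\{\phi(k)\}$ and suppose, for contradiction, that there is a set $A \subseteq \mathcal{T}$ of positive probability on which $c(t)=k$ but $\|\phi^*(t)-\phi(j)\| < \|\phi^*(t)-\phi(k)\|$ for some $j \ne k$. Partition $A$ according to the index $j$ and by rational thresholds on the gap $\|\phi^*(t)-\phi(k)\|^2 - \|\phi^*(t)-\phi(j)\|^2$. This yields a subset $A' \subseteq A$ of positive (and, by choosing it small, arbitrarily small) probability, with a fixed $j$ and gap at least $\delta > 0$.

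Reassigning $A'$ from cell $k$ to cell $j$ while keeping the codebook fixed lowers $\varepsilon$ by at least $\delta\,\Pr(A')$. Re-centring the codebook afterwards only lowers $\varepsilon$ further, by the centroid step. This contradicts local optimality, so almost every $t$ is assigned to a nearest codeword in $\phi^*$-space. The conclusion concerns distances between $\phi^*(t)$ and the codewords, so the cells are nearest-neighbour (Voronoi) cells pulled back through $\phi^*$.

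\emph{Role of the no-boundary-mass hypothesis.} I expect the main obstacle to be this reassignment step, specifically making the perturbation both small and well defined. Types whose image $\phi^*(t)$ is equidistant from two codewords can be assigned either way without changing $\varepsilon$. The hypothesis that no mass lies on cell boundaries makes the nearest-neighbour rule, and hence the partition, unique up to null sets. This is what allows the conditions to be stated as equalities almost everywhere.

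\emph{Necessity but not sufficiency.} The conditions are necessary only; to show they are not sufficient I will cite the standard counterexample. A bimodal one-dimensional $Q$ with $K=2$ admits a stationary non-optimal partition, for instance a symmetric trimodal law split at a saddle point. Since $\phi^*$ can be the identity, this transfers directly to the present setting.
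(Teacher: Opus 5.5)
Your proof is correct, but it takes a different route from the paper's. The paper argues by identification. It declares the minimisation of $\varepsilon$ over $K$-partitions with conditional-mean centroids to be, ``by definition,'' the Lloyd--Max problem on $\phi^*(t)$, and cites Lloyd, Max and Gray--Neuhoff for the fact that the two stationarity conditions are its first-order necessary conditions. It treats non-sufficiency only indirectly: one-dimensional global-optimality results need extra density conditions, and multi-dimensional local optima are generic, hence LBG with restarts. You derive both conditions directly, using the cell-wise bias--variance identity for the centroids and a countable rational-threshold decomposition plus a reassignment for nearest neighbours, and you give an explicit stationary but non-optimal partition.

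Your route buys three things. First, you work with partitions of $\mathcal{T}$ and \emph{conclude} that locally optimal cells are, up to null sets, pullbacks of Voronoi cells in $\phi^*$-space. The paper's identification takes this for granted, although when $\phi^*$ is not injective a partition of $\mathcal{T}$ could a priori split a fibre of $\phi^*$. Your conclusion also certifies the factorisation hypothesis of \Cref{lem:indirect} for every locally optimal encoder, not only for Lloyd--Max encoders by construction. Second, you actually use the stated hypotheses (finite second moment, positive-probability cells, no boundary mass), which the paper's proof never invokes. Third, your symmetric trimodal example is sharper than the paper's remark. With a narrow central mode, the symmetric split is a local \emph{maximum} of $\varepsilon$ along the threshold direction, so stationarity does not even imply local optimality. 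The paper's version buys brevity and the pointers to one-dimensional sufficiency results and to LBG practice.

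Two small fixes. Taking $A'$ of arbitrarily small probability requires $p_T$ to be non-atomic on $A'$. For the finite-alphabet or empirical reading (as in \Cref{alg:lloyd}), read ``local'' as allowing single-atom reassignments. Your argument handles this unchanged, since the move lowers $\varepsilon$ by at least $\delta\Pr(A')$ whatever the size of $A'$. Also, ``bimodal'' in your last paragraph should read ``trimodal,'' which is the example that actually works.
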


\begin{proof}
The optimisation
$\min_{c} \mathbb{E}[\|\phi^*(t) - \phi(c(t))\|^2]$ over $K$-partitions
with conditional-mean centroids is the Lloyd-Max problem on $\phi^*(t)$
by definition; its first-order necessary conditions are the Lloyd-Max
stationarity conditions. In one dimension, additional conditions on
the source density are known under which Lloyd-Max stationary points
are globally optimal. For multi-dimensional sources such as the
smart-grid instance of \Cref{sec:smart-grid}, multiple local optima
are generic and the operational quantiser uses the iterative
generalisation of Linde, Buzo and Gray~\cite{lbg1980} with random
restarts to mitigate suboptimal fixed points. See the quantization
survey of Gray and Neuhoff~\cite{grayneuhoff1998} for the broader
theory.
\end{proof}

\paragraph{What \Cref{prop:task-loss,prop:decoder,prop:encoder} say
together.} Under the designed-source mapping, and on the deterministic
common-category encoder class of \Cref{sec:mapping}, SK's
exponential-tilting parametric decoder specialises to the conditional
mean on the squared-oracle-error surrogate (exact for isotropic
quadratic utility), the generalised Blahut-Arimoto encoder iteration
specialises to a Lloyd-Max iteration on $\phi^*(t)$, and the
Markov-chain-conditional identity
$R(D_s, D_o) = \max\{R(D_s), R(D_o)\}$ collapses to standard
rate-distortion on $\phi^*(t)$ for the single-distortion welfare
problem on this restricted class.

We do not claim the underlying tools (curvature sandwich, MMSE,
Lloyd-Max) as new contributions. The contribution of this section is
the identification that, on designed sources, the SK objects specialise
to the conditional-mean decoder and the Lloyd-Max stationarity
conditions on $\phi^*(t)$, recovering textbook quantities along
each axis.

\subsection{Algorithmic contrast}\label{sec:algo-contrast}

The reduction is also visible at the algorithm level. SK's framework is
computed by the generalised Blahut-Arimoto iteration of
\Cref{alg:sk-gba}, which alternates updates of the joint output marginal
$\nu(\hat{z}, \hat{x})$ driven by the coupled exponential weight
$A(z, \hat{z}, \hat{x})$ in~\eqref{eq:sk-gba}. The designed-source
encoder, by \Cref{prop:encoder}, is the Lloyd-Max iteration of
\Cref{alg:lloyd} on the oracle samples $\{\phi^*(t_i)\}_{i=1}^{M}$.
Throughout this paper $M$ denotes the population/sample size: the
number of agents in the worked examples of \Cref{sec:worked-example}
and \Cref{sec:smart-grid}, equivalently the Lloyd-Max training-sample
size in \Cref{alg:lloyd}, and the population partitioned by the
aggregate-testing setup of \Cref{def:agg-test}.

\begin{algorithm}[t]
\caption{SK generalised Blahut-Arimoto~\cite[Theorem~4]{sk2023}}
\label{alg:sk-gba}
\begin{algorithmic}[1]
\Require source $p(z)$; distortions $d_s, d_o$; multipliers
  $s_1, s_2 \leq 0$; tolerance $\tau > 0$
\State $\nu^{(0)}(\hat{z}, \hat{x}) \gets
  1/(|\hat{\mathcal{Z}}|\,|\hat{\mathcal{X}}|)$,\quad $k \gets 0$
\State $A(z, \hat{z}, \hat{x}) \gets
  \exp\!\bigl(s_1 d_s(z, \hat{x}) + s_2 d_o(z, \hat{z})\bigr)$
\Repeat
  \State update every entry of $\nu^{(k+1)}(\hat{z}, \hat{x})$
    via~\eqref{eq:sk-gba}
  \State $k \gets k + 1$
\Until{$\|\nu^{(k)} - \nu^{(k-1)}\|_1 \leq \tau$}
\State \Return $q^*(\hat{z}, \hat{x} \mid z)$ via~\eqref{eq:sk-tilting}
\end{algorithmic}
\end{algorithm}

\begin{algorithm}[t]
\caption{Lloyd-Max on $\phi^*$}\label{alg:lloyd}
\begin{algorithmic}[1]
\Require oracle samples $\{\phi^*(t_i)\}_{i=1}^{M}$; category count $K$
\State initialise centroids $\{\phi(k)\}_{k=1}^{K}$
\Repeat
  \State assign $c(t_i) \gets
    \arg\min_{k} \|\phi^*(t_i) - \phi(k)\|^2$ for $i = 1, \ldots, M$
  \State update $\phi(k) \gets |S_k|^{-1}
    \sum_{i \in S_k} \phi^*(t_i)$,\;
    $S_k = \{i : c(t_i) = k\}$
\Until{centroids stationary}
\State \Return centroids $\{\phi(k)\}$ and partition $\{S_k\}$
\end{algorithmic}
\end{algorithm}

\begin{table}[t]
\caption{Computational contrast at rate $R = \log_2 K$.}
\label{tab:complexity}
\centering
\begin{tabular}{lll}
\toprule
 & \Cref{alg:sk-gba} & \Cref{alg:lloyd} \\
\midrule
State &
  joint $\nu(\hat{z}, \hat{x})$ &
  $K$ centroids $\phi(k)$ \\
State size &
  $|\hat{\mathcal{Z}}|\,|\hat{\mathcal{X}}|$ &
  $K$ \\
Per-iter.\ cost &
  $O\!\bigl(|\mathcal{Z}|\,|\hat{\mathcal{Z}}|\,|\hat{\mathcal{X}}|\bigr)$ &
  $O(MK)$ \\
Couples $\hat{x}$, $\hat{z}$? &
  yes (via $A$) &
  no \\
Convergence &
  monotone in functional &
  monotone descent of $\varepsilon$ \\
Output &
  conditional law $q^*$ &
  centroids + partition \\
\bottomrule
\end{tabular}
\end{table}

The two iterations are summarised in \Cref{tab:complexity}.
\Cref{alg:sk-gba} maintains state on the product alphabet
$\hat{\mathcal{Z}} \times \hat{\mathcal{X}}$ because the exponential
weight $A$ couples the two distortions, and a per-iteration sweep costs
$O(|\mathcal{Z}|\,|\hat{\mathcal{Z}}|\,|\hat{\mathcal{X}}|)$.
\Cref{alg:lloyd} maintains only $K$ centroids in $\phi^*$-space and
costs $O(MK)$ per iteration because the semantic reconstruction
$\phi(c)$ is constrained, by \Cref{prop:decoder}, to be the conditional
mean of $\phi^*(t)$ within the category. The contrast is structural,
not merely a constant-factor speedup. The reduction removes the joint
exponential tilt and the dependence on $|\hat{\mathcal{X}}|$, replacing
both with a single MSE update on $K$ centroids.

\section{Opposing Monotonicity and the Feasibility Band}\label{sec:opposing}

\Cref{prop:task-loss,prop:decoder,prop:encoder} address the
single-distortion welfare problem. Designed mechanisms with a
verification side-constraint require both a welfare distortion and a
detection distortion. \Cref{prop:opposing} shows that the two move in
opposite directions in the rate variable $K$, so the joint constraint
is a feasibility band rather than a maximum.

\begin{definition}[Aggregate-testing setup]\label{def:agg-test}
An \emph{aggregate-testing setup} consists of: (a) a population of size
$M$ partitioned by the category map $c: \mathcal{T} \to \{1, \ldots, K\}$
with per-category pool size $m_k = |\{i : c(t_i) = k\}|$; (b) a
per-category aggregate statistic $T_k = m_k^{-1} \sum_{i \in
\mathrm{cat}_k} y_i$ used as test statistic; (c) per-agent measurement
variance $\sigma^2_{\mathrm{indiv}}$ and category-level shared-noise
variance $\sigma^2_{\mathrm{temp}}$, so that $T_k$ has variance
$\sigma^2_T(m_k) = \sigma^2_{\mathrm{temp}} + \sigma^2_{\mathrm{indiv}}/m_k$;
and (d) under the alternative $H_1$, a category-level mean shift
$\delta$. We write $\beta_{\mathrm{agg}}(K)$ for the detection power
of the one-sided $z$-test at level $\alpha_{\mathrm{FA}}$ under this
setup with uniform assignment ($m_k = M/K$).
\end{definition}

\begin{proposition}[Feasibility band under load-bearing partition]\label{prop:opposing}
Consider the designed-source mechanism class with deterministic
$K$-category encoders $c: \mathcal{T} \to \{1,\ldots,K\}$, paired
with the aggregate-testing side-constraint of \Cref{def:agg-test} on
the same partition $c$. Let
$\varepsilon^*(K) := \inf_{K\text{-partition}}
  \mathbb{E}\bigl[\|\phi^*(t) - \phi(c(t))\|^2\bigr]$
be the within-category-variance envelope over $K$-partitions.
$\varepsilon^*(K)$ is non-increasing in $K$. Under
uniform population assignment ($m_k = M/K$), the aggregate-test
detection power $\beta_{\mathrm{agg}}(K)$ is non-increasing in $K$.
The joint target pair $(\varepsilon^*, \beta^*)$ is therefore
achievable only if
\begin{equation}\label{eq:band}
K_{\min}(\varepsilon^*) \;\leq\; K \;\leq\; K_{\max}(\beta^*),
\end{equation}
with $K_{\min}(\varepsilon^*) := \inf\{K : \varepsilon^*(K) \leq
\varepsilon^*\}$ and $K_{\max}(\beta^*) := M/m^*(\beta^*)$ the
detection-feasibility ceiling at uniform assignment. The
uniform-assignment endpoints in~\eqref{eq:band} are the loosest such
envelope over population assignments (\Cref{rem:pool-vector}), and non-uniform
assignments weakly contract the band, strictly so when the minimum
pool size falls below the uniform value and the detection-side
constraint is binding. The joint design problem is
therefore not a standard two-distortion rate-distortion problem,
because the cardinality $K$ enters as an external constraint on the
admissible-encoder class via the load-bearing-partition assumption.
\end{proposition}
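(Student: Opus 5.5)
The argument splits into three parts: monotonicity of the welfare envelope, monotonicity of detection power, and the band plus its assignment-envelope clause.

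\textbf{Welfare side.} For $\varepsilon^*(K)$ being non-increasing, I will use a refinement argument. Take any $K$-partition $c$ with conditional-mean centroids. Split one cell of positive probability into two measurable pieces, or append an empty cell, to get a $(K+1)$-partition $c'$. Keep the old centroid on both pieces. This decoder attains the same $\varepsilon$ as $c$. By \Cref{prop:decoder}, replacing each centroid by the per-cell conditional mean can only lower $\varepsilon$. Hence $\varepsilon^*(K+1)\le \varepsilon(c')\le \varepsilon(c)$, and taking the infimum over $c$ gives the claim. Finiteness of the envelope follows from the finite second moment assumed in \Cref{prop:encoder}.

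\textbf{Detection side.} Under uniform assignment $m_k=M/K$, and treating $K$ as a real parameter with ceilings suppressed, \Cref{def:agg-test}(c) gives
\begin{equation*}
\sigma_T^2(M/K)=\sigma^2_{\mathrm{temp}}+K\sigma^2_{\mathrm{indiv}}/M,
\end{equation*}
which is increasing in $K$. The one-sided $z$-test at level $\alpha_{\mathrm{FA}}$ against a mean shift $\delta>0$ has power
\begin{equation*}
\beta_{\mathrm{agg}}(K)=\Phi\bigl(\delta/\sigma_T(M/K)-z_{1-\alpha_{\mathrm{FA}}}\bigr),
\end{equation*}
where $\Phi$ is increasing. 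So $\beta_{\mathrm{agg}}$ is non-increasing in $K$, strictly so when $\sigma^2_{\mathrm{indiv}}>0$. Equivalently, power is non-decreasing in the pool size $m$. I define $m^*(\beta^*)$ as the smallest pool size at which the power reaches $\beta^*$, which exists by continuity and monotonicity. Then $\beta_{\mathrm{agg}}(K)\ge\beta^*$ iff $M/K\ge m^*$ iff $K\le K_{\max}(\beta^*)$.

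\textbf{Band.} Achievability of the pair requires $\varepsilon(c)\le\varepsilon^*$, which forces $\varepsilon^*(K)\le\varepsilon^*$. Since $\varepsilon^*(\cdot)$ is monotone, its sublevel set is an up-set in $K$, so $K\ge K_{\min}(\varepsilon^*)$. Combined with the detection side, this gives \eqref{eq:band}. Both conditions are only necessary: the infimum in $\varepsilon^*(K)$ may not be attained.

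\textbf{Non-uniform assignments.} I expect this clause to be the main obstacle, because it depends on how the detection constraint is imposed on unequal pools. I will adopt the reading that every category test must meet $\beta^*$, so the binding quantity is the minimum pool size $\min_k m_k$. Since $\sum_k m_k=M$, we have $\min_k m_k\le M/K$, with equality iff the assignment is uniform. Detection feasibility needs $\min_k m_k\ge m^*$, which implies $M/K\ge m^*$. So any non-uniform assignment allows only a subset of the $K$ allowed under uniform assignment; the welfare side is unaffected by the assignment. If $\min_k m_k<M/K$ and the constraint binds, some $K$ near $K_{\max}$ is lost, which gives strict contraction. I will state this reading explicitly as the content of \Cref{rem:pool-vector}.

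\textbf{Final remark.} The last sentence of the proposition is interpretive. I will justify it by noting that the detection constraint depends on the pool-size vector of $c$, not on a per-letter expectation $\mathbb{E}[d_o(t,\hat z)]$. It therefore restricts the admissible encoder class rather than entering \eqref{eq:sk-rd} as a single-letter distortion constraint.
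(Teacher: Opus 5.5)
Your proposal is correct and follows the paper's proof. It uses the same three steps: the cell-splitting refinement for monotonicity of $\varepsilon^*(K)$, monotonicity of $\sigma_T^2(M/K)$ in $K$ for $\beta_{\mathrm{agg}}$, and intersecting the two resulting necessary conditions. Your handling of non-uniform assignments through $\min_k m_k \le M/K$ is the argument the paper places in \Cref{rem:pool-vector}.

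One small correction: continuity and monotonicity alone do not guarantee that $m^*(\beta^*)$ exists. The power saturates at $\Phi(\delta/\sigma_{\mathrm{temp}} - z_{1-\alpha_{\mathrm{FA}}})$ as $m\to\infty$, so you also need $\delta^2/(z_{1-\alpha_{\mathrm{FA}}}+z_{\beta^*})^2 > \sigma^2_{\mathrm{temp}}$; otherwise the detection side is infeasible at every $K$, a condition the paper records in \Cref{sec:worked-example}.
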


\begin{proof}
(i) Welfare side. Any $K$-partition lifts to a $(K+1)$-partition by
splitting one cell, with the refined partition free to choose its
category allocation $\phi$ independently on each sub-cell. Applied to
the within-category variance, the refinement weakly reduces the sum
of within-cell variances, so $\varepsilon^*(K)$ is non-increasing in
$K$. A welfare target $\varepsilon^*(K) \leq \varepsilon^*$ therefore
requires $K \geq K_{\min}(\varepsilon^*)$.
(ii) Verification side. Under \Cref{def:agg-test} with uniform
population assignment, $m_k = M/K$ decreases in $K$, so
$\sigma_T^2(m_k)$ increases in $K$, and the detection power
$\beta_{\mathrm{agg}}(K)$ decreases in $K$. A detection target
$\beta_{\mathrm{agg}}(K) \geq \beta^*$ therefore requires
$K \leq K_{\max}(\beta^*)$.
(iii) Combining, $K_{\min} \leq K \leq K_{\max}$ is necessary for the
joint target pair to be achievable under uniform assignment. The
band is non-empty iff $K_{\min} \leq K_{\max}$.
\end{proof}

\begin{remark}[Lower endpoint reads on the surrogate envelope]\label{rem:surrogate-endpoint}
The lower endpoint $K_{\min}(\varepsilon^*)$ in \eqref{eq:band} reads
the welfare target $\varepsilon^*$ on the within-category-variance
envelope $\varepsilon^*(K)$, a surrogate for exact welfare loss. Under
the conditional-mean decoder of \Cref{prop:decoder}, exact welfare loss
is controlled sandwich-wise by $\varepsilon^*$ via
\Cref{prop:task-loss} but need not be monotone in $K$ at non-quadratic
utility; under quadratic utility the conditional-mean decoder is exact
and the band reads directly on welfare loss. The band is therefore a
cardinality-level necessary condition stated on the surrogate, exact on
welfare under quadratic utility and otherwise sandwich-controlled.
\end{remark}

\begin{remark}[Non-uniform assignment: pool-size vector form]\label{rem:pool-vector}
\Cref{prop:opposing} restricts to uniform population assignment. For a
general assignment with pool-size vector $(m_1, \ldots, m_K)$ summing
to $M$, the \Cref{def:agg-test} detection condition binds at the
weakest pool $m_{\min} := \min_k m_k$, since $\sigma_T^2(m_k) =
\sigma_{\mathrm{temp}}^2 + \sigma_{\mathrm{indiv}}^2 / m_k$ is
decreasing in $m_k$ and the aggregate test must meet the target power
$\beta^*$ at every category. Writing $m^*(\beta^*)$ for the threshold
pool size at which the detection target is exactly met, the
verification-side constraint is $m_{\min} \geq m^*(\beta^*)$. At fixed
$K$, $m_{\min}$ is maximised at uniform assignment with value $M/K$, so
the uniform-assignment endpoint $K_{\max}(\beta^*) = M / m^*(\beta^*)$
is the loosest detection-side upper bound on $K$ available across
population assignments. Two partitions sharing the same $K$ or the same
entropy $H(C)$ but differing in their pool-size vectors carry different
verification-side cuts: pool-size heterogeneity strictly contracts the
band. The uniform-assignment band of \Cref{prop:opposing} is therefore
a cardinality-level outer envelope, attained only when the realised
partition has balanced pool sizes, or when an external balancing
mechanism enforces them. A generic welfare-side Lloyd-Max partition
need not satisfy this condition, and pool-size heterogeneity contracts
the operational detection endpoint.
\end{remark}

\subsection{Worked example: feasibility band on a Gaussian designed source}\label{sec:worked-example}

We illustrate \Cref{prop:opposing} on a Gaussian designed source. The
demand type is $t \sim \mathcal{N}(0, \sigma_t^2)$ and the oracle
allocation is $\phi^*(t) = t$ after centring, so
$\phi^*(T) \sim \mathcal{N}(0, \sigma_\phi^2)$ with
$\sigma_\phi^2 = \sigma_t^2$. The population contains $M$ agents,
divided into $K$ categories by the Lloyd-Max quantiser of
\Cref{prop:encoder} applied to $\phi^*(t)$. Each category contains
$m_k = M/K$ agents. By \Cref{rem:pool-vector}, this uniform assignment
is the balanced-category outer envelope of the feasibility band, not a
separate design choice layered on Lloyd-Max. A realised Lloyd-Max
partition may give nonuniform category masses and therefore a stricter
verification ceiling than the band reported below.

\paragraph{Welfare side: $R_{\min}^{\mathrm{Sh}}(\varepsilon^*)$.}
For a Gaussian source, the Shannon rate-distortion
function~\cite[Theorem~10.3.2]{coverthomas2006} is
\begin{equation}\label{eq:rdf}
R(D) \;=\; \tfrac{1}{2} \log_2 \bigl(\sigma_\phi^2 / D\bigr),
\end{equation}
giving $D(R) = \sigma_\phi^2 \cdot 2^{-2R}$. This is the scalar
Gaussian Shannon rate-distortion envelope on $\phi^*(T)$, not the
finite-$K$ Lloyd-Max distortion. Identifying $\varepsilon$ with $D$,
the welfare-target constraint $\varepsilon \leq \varepsilon^*$
is satisfied at any rate
\begin{equation}\label{eq:R-min}
R \;\geq\; R_{\min}^{\mathrm{Sh}}(\varepsilon^*) \;:=\;
R_{\mathrm{Sh}}(\varepsilon^*) \;=\;
  \tfrac{1}{2} \log_2 \bigl(\sigma_\phi^2 / \varepsilon^*\bigr),
\end{equation}
the Shannon rate-distortion floor on the welfare-side source. Under
uniform assignment of $C$, the partition-cardinality rate and the
entropy rate coincide ($R_H = R_K = \log_2 K$), and this Shannon-derived
rate constraint instantiates as the pool-size constraint
\begin{equation}\label{eq:K-min}
K \;\geq\; K_{\min}^{\mathrm{Sh}}(\varepsilon^*) \;:=\; 2^{R_{\min}^{\mathrm{Sh}}(\varepsilon^*)}
  \;=\; \sigma_\phi / \sqrt{\varepsilon^*}.
\end{equation}
$R_{\min}^{\mathrm{Sh}}$ and $K_{\min}^{\mathrm{Sh}}$ are decreasing in
$\varepsilon^*$, as expected. Tighter welfare targets demand more
rate, equivalently more categories. The operational
$K_{\min}(\varepsilon^*)$ of \Cref{prop:opposing} is obtained from the
empirical Lloyd-Max within-category-variance envelope rather than from
the Shannon benchmark, and Equation~\eqref{eq:R-min} is therefore a
Shannon lower benchmark on the operative cardinality threshold
$R_{\min}(\varepsilon^*) := \log_2 K_{\min}(\varepsilon^*)$ of
\Cref{cor:rate-gap}, with equality only when the finite-$K$ encoder
is Shannon-RDF-optimal.

\paragraph{Detection side: $K_{\max}(\beta^*)$.}
The verification side-constraint requires that an aggregate test on
each category detect a category-level effect of size $\delta$ with
power at least $\beta^*$ at false-alarm level $\alpha_{\mathrm{FA}}$
(disambiguated from the strong-concavity constant $\alpha$ of A2).
Within
category $k$, the aggregate statistic
\[
T_k \;=\; \frac{1}{m_k} \sum_{i \in \mathrm{cat}_k} y_i
\]
has variance
$\sigma_T^2(K) = \sigma_{\mathrm{temp}}^2 + \sigma_{\mathrm{indiv}}^2 K / M$,
where $\sigma_{\mathrm{temp}}^2$ is a category-level shared-noise
component and $\sigma_{\mathrm{indiv}}^2$ is per-agent measurement
noise. The detection power at threshold
$z_{1-\alpha_{\mathrm{FA}}}\,\sigma_T(K)$ under a category-level mean shift $\delta$
is
\begin{equation}\label{eq:beta-K}
\beta_{\mathrm{agg}}(K)
  \;=\; \Phi\!\left(\frac{\delta}{\sigma_T(K)} - z_{1-\alpha_{\mathrm{FA}}}\right).
\end{equation}
The constraint $\beta_{\mathrm{agg}} \geq \beta^*$ is satisfied when
\begin{equation}\label{eq:K-max}
K \;\leq\; K_{\max}(\beta^*) \;:=\;
  M \cdot
  \frac{\delta^2 / (z_{1-\alpha_{\mathrm{FA}}} + z_{\beta^*})^2
        - \sigma_{\mathrm{temp}}^2}{\sigma_{\mathrm{indiv}}^2},
\end{equation}
provided the bracketed numerator is positive. $K_{\max}$ is decreasing
in $\beta^*$. Tighter power targets demand larger pools and therefore
fewer categories. If
$\delta^2 / (z_{1-\alpha_{\mathrm{FA}}} + z_{\beta^*})^2 \leq \sigma_{\mathrm{temp}}^2$
the test fails even at $K = 1$, and the verification side-constraint
is infeasible irrespective of $\varepsilon^*$. The right-hand side
of~\eqref{eq:K-max} is real-valued; the achievable integer rate is
$K \leq \lfloor K_{\max}(\beta^*) \rfloor$, and the welfare-side
counterpart in~\eqref{eq:K-min} is $K \geq \lceil K_{\min}^{\mathrm{Sh}}(\varepsilon^*)
\rceil$. We retain the real-valued expressions where they simplify the
monotonicity arguments, with the floor/ceiling understood at the
integer-$K$ instantiation.

\paragraph{Feasibility band.}
Combining \eqref{eq:R-min} and \eqref{eq:K-max}, within the idealised
scalar Gaussian RDF model and uniform assignment, the target pair
$(\varepsilon^*, \beta^*)$ lies in the Shannon-benchmark feasibility
envelope iff
\begin{equation}\label{eq:band-condition}
R_{\min}^{\mathrm{Sh}}(\varepsilon^*) \;\leq\; \log_2 K_{\max}(\beta^*),
\end{equation}
equivalently in pool-size form,
\begin{equation}\label{eq:band-condition-K}
\frac{\sigma_\phi}{\sqrt{\varepsilon^*}}
  \;\leq\;
  M \cdot
  \frac{\delta^2 / (z_{1-\alpha_{\mathrm{FA}}} + z_{\beta^*})^2
        - \sigma_{\mathrm{temp}}^2}{\sigma_{\mathrm{indiv}}^2}.
\end{equation}
The band is non-empty for slack target pairs and shrinks as either
target tightens. The critical welfare target at which the band
collapses, with $\beta^*$ fixed, is
\begin{equation}\label{eq:eps-critical}
\varepsilon^*_{\mathrm{c}}(\beta^*)
  \;=\; \sigma_\phi^2 \,/\, K_{\max}(\beta^*)^2.
\end{equation}

\paragraph{Numerical instance.}
\Cref{fig:band} shows two regimes for the parameter setting
$\sigma_\phi^2 = 1$, $M = 200$, $\sigma_{\mathrm{temp}}^2 = 0.02$,
$\sigma_{\mathrm{indiv}}^2 = 1$, $\delta = 1$, $\alpha_{\mathrm{FA}} = 0.05$,
$\beta^* = 0.8$ (so $z_{1-\alpha_{\mathrm{FA}}} + z_{\beta^*} \approx 2.49$,
$K_{\max} \approx 28.3$, and
$\varepsilon^*_{\mathrm{c}} \approx 1.24 \times 10^{-3}$). Panel~(a)
sets $\varepsilon^* = 5 \times 10^{-3}$, giving the Shannon-benchmark
band $K \in [14.1,\, 28.3]$ shaded green. The operational lower
endpoint for a finite-$K$ deterministic quantiser would be read from
the empirical Lloyd-Max envelope. Panel~(b) sets
$\varepsilon^* = 8 \times 10^{-4}$, below the benchmark critical value,
so $K_{\min}^{\mathrm{Sh}} \approx 35.4 > K_{\max} \approx 28.3$. Since
the operational deterministic lower endpoint cannot lie below the
Shannon floor, the joint target is infeasible.

\begin{figure}[t]
\centering
\includegraphics[width=\linewidth]{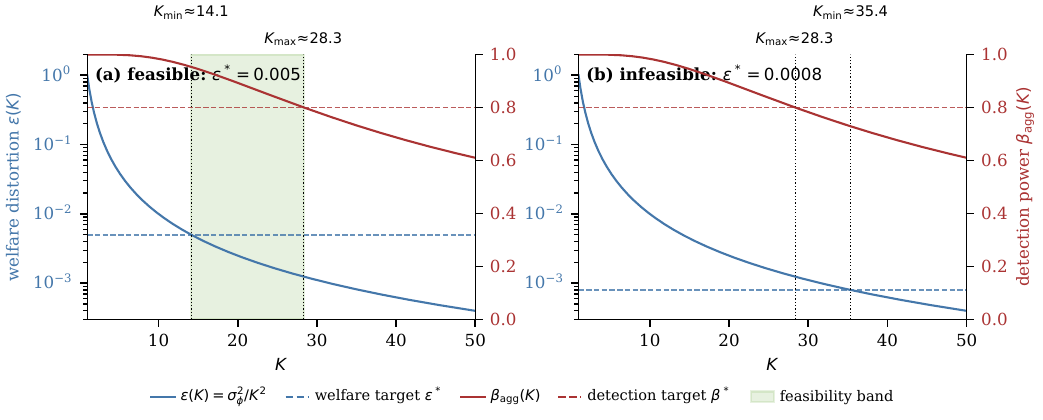}
\caption{Shannon-benchmark feasibility band
$[K_{\min}^{\mathrm{Sh}}(\varepsilon^*), K_{\max}(\beta^*)]$ for the
Gaussian designed-source instance. The Gaussian Shannon RDF envelope
$D(\log_2 K) = \sigma_\phi^2 / K^2$ on $\phi^*(T)$ (left log-axis,
blue) and the aggregate detection power $\beta_{\mathrm{agg}}(K)$
(right linear-axis, red) move in opposite directions in $K$.
Horizontal dashed lines mark $\varepsilon^*$ and $\beta^* = 0.8$, and
vertical dotted lines locate $K_{\min}^{\mathrm{Sh}}$ and $K_{\max}$.
(a) Non-empty band at $\varepsilon^* = 5 \times 10^{-3}$.
(b) Collapsed band at $\varepsilon^* = 8 \times 10^{-4}$, where
$K_{\min}^{\mathrm{Sh}}$ exceeds $K_{\max}$.}
\label{fig:band}
\end{figure}

\subsection{Rate equivalence under designed-source mapping}\label{sec:rate-equivalence}

\Cref{prop:opposing} characterises the cardinality-level outer band for
designed-source dual-purpose signals in the opposing-monotonicity
regime, where the verification-side mapping $R \mapsto D_o(R)$ is not
monotone non-increasing. The complementary question is whether
the designed-source reduction differs from SK's
formula~\eqref{eq:sk-rd} in the standard regime, where the
verification-side mapping is monotone non-increasing and $R(T; D_o)$ is
a standard rate-distortion function. The following lemma and corollary
show that the two characterisations agree on the Shannon-RDF lower-bound
structure inside this regime, with exact agreement under compatible
RDF-optimal common-category reconstructions, and identify opposing
monotonicity as a structurally distinct regime
in which SK's rate-monotonicity scope is not met and a complementary
characterisation is required. Other structural exits from the SK
admissible class are also possible, for instance non-single-letter,
finite-block, or adversarial distortion structures; opposing
monotonicity is the regime considered here.

\begin{lemma}[Indirect rate-distortion reformulation]\label{lem:indirect}
Assume $\mathcal{T}$ is a Polish space, $\phi^*: \mathcal{T} \to
\mathbb{R}^r$ is Borel, and a regular conditional distribution for $t$
given $\phi^*(t)$ exists (automatic under these Polish-space
hypotheses). Let
$\ell: \mathcal{T} \times \mathbb{R}^r \to \mathbb{R}_{\geq 0}$ be
jointly Borel-measurable with $\mathbb{E}[\ell(t, \hat{x})] < \infty$
for every measurable $\hat{x}$, and define
\begin{equation}\label{eq:ds-star}
d_s^*(x, \hat{x}) \;:=\;
  \mathbb{E}\bigl[\,\ell(t, \hat{x}) \,\big|\, \phi^*(t) = x\,\bigr].
\end{equation}
Then $d_s^*$ is non-negative and measurable on
$\mathcal{X} \times \mathbb{R}^r$, hence an admissible SK semantic
distortion in~\eqref{eq:sk-rd}. Suppose the encoder
$c: \mathcal{T} \to \{1, \ldots, K\}$ factors through $\phi^*$, in the
sense that $c(t) = \tilde{c}(\phi^*(t))$ for some measurable
$\tilde{c}: \mathbb{R}^r \to \{1, \ldots, K\}$ (equivalently, $c$ is
$\sigma(\phi^*)$-measurable). Then for any decoder
$g: \{1, \ldots, K\} \to \mathbb{R}^r$, the reconstruction
$\hat{X} := g(c(t))$ satisfies
\begin{equation}\label{eq:tower}
\mathbb{E}\bigl[\,d_s^*(\phi^*(t),\, \hat{X})\,\bigr]
\;=\; \mathbb{E}\bigl[\,\ell(t,\, \hat{X})\,\bigr].
\end{equation}
\end{lemma}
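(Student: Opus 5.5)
We will prove the two claims separately: measurability and non-negativity of $d_s^*$ first, then the tower identity~\eqref{eq:tower}.

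\textbf{Step 1: $d_s^*$ is admissible.} Let $\kappa(x, \mathrm{d}t)$ be the regular conditional distribution of $t$ given $\phi^*(t) = x$. Such a kernel exists because $\mathcal{T}$ is Polish and $\phi^*$ is Borel. We define
\[
d_s^*(x, \hat{x}) := \int_{\mathcal{T}} \ell(t, \hat{x})\, \kappa(x, \mathrm{d}t),
\]
which is a version of~\eqref{eq:ds-star} jointly in $(x, \hat{x})$. Non-negativity is immediate, since $\ell \geq 0$ and $\kappa(x, \cdot)$ is a probability measure. For joint measurability in $(x, \hat{x})$ we use the standard kernel lemma: if $f \geq 0$ is measurable on a product space $\mathcal{X} \times \mathbb{R}^r \times \mathcal{T}$, then $(x, \hat{x}) \mapsto \int f(x, \hat{x}, t)\, \kappa(x, \mathrm{d}t)$ is measurable. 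We prove this by a monotone-class argument. It holds for indicators of measurable rectangles $A \times B \times C$, since the integral equals $1_A(x)\, 1_B(\hat{x})\, \kappa(x, C)$. It extends to all product-measurable sets by the $\pi$--$\lambda$ theorem, and then to non-negative $f$ by simple functions and monotone convergence. Applying it to $f(x, \hat{x}, t) = \ell(t, \hat{x})$ gives measurability. The value $+\infty$ is permitted; since SK distortions are non-negative measurable maps, $d_s^*$ is admissible in~\eqref{eq:sk-rd}.

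\textbf{Step 2: the tower identity.} Set $X := \phi^*(t)$. By the factorisation hypothesis, $\hat{X} = g(\tilde{c}(X)) =: h(X)$ with $h$ Borel, so $\hat{X}$ is $\sigma(X)$-measurable. The key fact is the disintegration identity: for non-negative jointly measurable $F(x, t)$,
\[
\mathbb{E}[F(X, t)] = \mathbb{E}\Bigl[\int F(X, s)\, \kappa(X, \mathrm{d}s)\Bigr].
\]
We again verify this first on rectangles $F = 1_A(x)\, 1_C(t)$, where it reduces to the defining property of $\kappa$, and then extend by monotone class as in Step 1. Applying it with $F(x, t) := \ell(t, h(x))$, which is measurable as a composition of Borel maps, gives
\[
\mathbb{E}[\ell(t, \hat{X})] = \mathbb{E}[d_s^*(X, h(X))] = \mathbb{E}[d_s^*(\phi^*(t), \hat{X})].
\]
This is~\eqref{eq:tower}. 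Because both sides are integrals of non-negative functions, the identity holds in $[0, \infty]$ by Tonelli. The integrability assumption $\mathbb{E}[\ell(t, \hat{x})] < \infty$ only serves to make both sides finite.

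\textbf{Main obstacle.} The delicate point is that $\hat{X}$ is random rather than a fixed $\hat{x}$. Naive "pulling out what is known" requires a version of $d_s^*$ that is jointly measurable in $(x, \hat{x})$, not merely a family of conditional expectations each defined almost everywhere. Defining $d_s^*$ through the kernel $\kappa$ avoids this problem; this is why Step 1 is carried out with the kernel rather than with abstract conditional expectations.

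The factorisation of $c$ through $\phi^*$ is essential. Without it, $\hat{X}$ would carry information about $t$ beyond $X$, and the substitution $\hat{X} = h(X)$ inside the disintegration identity would fail.
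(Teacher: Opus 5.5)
Your proof is correct and follows the same route as the paper: $\hat{X}$ is $\sigma(\phi^*)$-measurable, so it can be frozen inside the conditional expectation, and the tower property finishes. The regular conditional kernel $\kappa$ and the monotone-class disintegration identity make rigorous both that pull-out step and the joint measurability of $d_s^*$, which the paper only attributes to ``standard properties of conditional expectation.''

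One small refinement to your ``main obstacle'' remark: since $g$ has finite range, \eqref{eq:tower} also follows cell by cell from $\mathbb{E}[\ell(t,\hat{X})] = \sum_{k=1}^{K} \mathbb{E}\bigl[\ell(t,g(k))\,\mathbf{1}\{\tilde{c}(\phi^*(t))=k\}\bigr]$, using any version of each section $d_s^*(\cdot, g(k))$. So joint measurability is genuinely needed for the admissibility claim, but not for the identity itself.
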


\begin{proof}
Measurability and non-negativity of $d_s^*$ follow from those of
$\ell$ by standard properties of conditional expectation. Under the
factorisation $c(t) = \tilde{c}(\phi^*(t))$, the reconstruction
$\hat{X} = g(\tilde{c}(\phi^*(t)))$ is
$\sigma(\phi^*(t))$-measurable. Pulling $\hat{X}$ out of the
conditional expectation yields
$\mathbb{E}[\ell(t, \hat{X}) \mid \phi^*(t)]
 = \mathbb{E}[\ell(t, x) \mid \phi^*(t)]\big|_{x = \hat{X}}
 = d_s^*(\phi^*(t), \hat{X})$.
The tower property then gives
$\mathbb{E}[\ell(t, \hat{X})]
 = \mathbb{E}\bigl[\mathbb{E}[\ell(t, \hat{X}) \mid \phi^*(t)]\bigr]
 = \mathbb{E}[d_s^*(\phi^*(t), \hat{X})]$.
\end{proof}

\begin{remark}[Lloyd-Max satisfies the factorisation]\label{rem:lloyd-factor}
The Lloyd-Max encoder of \Cref{prop:encoder} is by construction a
quantiser of $\phi^*(t)$, so its category function factors through
$\phi^*$ in the sense required by \Cref{lem:indirect}. Thus
\eqref{eq:tower} applies to the deterministic common-category encoders
considered in this paper. The lemma identifies the welfare loss with
the induced distortion on $\phi^*(T)$ for such encoders, and does not
assert unrestricted Shannon-RDF optimality of a finite-$K$ Lloyd-Max
partition.
\end{remark}

\Cref{lem:indirect} is the indirect rate-distortion reformulation of
Witsenhausen~\cite{witsenhausen1980}, specialised to the deterministic
semantic mapping $x = \phi^*(t)$. The same construction underlies the
indirect characterisation of Liu, Shao, Zhang and
Poor~\cite{lszp2022} reviewed in \Cref{sec:lzp}.

\begin{corollary}[SK specialisation under compatible designed-source mapping]\label{cor:no-gap}
Consider the common-category architecture of \Cref{sec:mapping}, in
which both reconstructions are deterministic functions of a shared
category index $c$ ($\hat{z} = c$, $\hat{x} = \phi(c)$ with $\phi$
injective). Write $R_{\mathrm{SK}}^{\mathrm{cc}}(D_s, D_o)$ for the SK
rate~\eqref{eq:sk-rd} computed over this common-category encoder
class, not over the unconstrained SK admissible class. For any welfare
loss $\ell$ satisfying the conditions of \Cref{lem:indirect}, any
observation distortion $d_o$ on $T$ for which $R(T; D_o)$ is a
standard rate-distortion function, and setting $d_s = d_s^*$
in~\eqref{eq:ds-star},
\begin{equation}\label{eq:no-gap}
R_{\mathrm{SK}}^{\mathrm{cc}}(D_s,\, D_o)
\;\geq\; \max\!\bigl\{\,
        R\bigl(\phi^*(T);\, D_s\bigr),\;
        R\bigl(T;\, D_o\bigr)
      \bigr\},
\end{equation}
where $R(\phi^*(T); D_s)$ is the Shannon rate-distortion function on
$\phi^*(t)$ at semantic-distortion target $D_s$ and $R(T; D_o)$ is the
Shannon rate-distortion function on $T$ at observation-distortion
target $D_o$. Equality holds when the semantic-side and
observation-side marginal optimisers are \emph{compatible}, i.e.\ when
a single common-category encoder simultaneously attains both marginal
rate-distortion targets at the rate of the larger marginal. In the
squared-oracle-error specialisation
$d_o(t, c) := \|\phi^*(t) - \phi(c)\|^2$ (written as $d_o = d_s^*$
below), both marginal deterministic common-category problems coincide
on $\phi^*(T)$, so a common Lloyd-Max quantiser that solves either
restricted-deterministic problem at the larger target solves both.
Equality with the unrestricted Shannon rate-distortion functions
$R(\phi^*(T); D_s)$, $R(T; D_o)$ then requires the additional
condition that this finite-$K$ common-category encoder be
Shannon-RDF-optimal, which is not implied by Lloyd-Max stationarity
alone.
\end{corollary}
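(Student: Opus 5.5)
The proof has three parts: the lower bound~\eqref{eq:no-gap}, equality under compatibility, and the squared-oracle-error specialisation.

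\textbf{Lower bound.} Fix any common-category encoder $c$ with $\hat{z}=c(t)$ and $\hat{x}=\phi(c(t))$ that meets both constraints $\mathbb{E}[d_s^*]\le D_s$ and $\mathbb{E}[d_o]\le D_o$. The chain-rule argument of \Cref{sec:sk-recap} gives $I(t;\hat{z},\hat{x})\ge\max\{I(t;\hat{x}),I(t;\hat{z})\}$. Because $\phi$ is injective, $\hat{x}$ and $\hat{z}$ determine each other, so all three quantities equal $H(C)=R_H$.

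\emph{Semantic side.} The encoder factors through $\phi^*$ (\Cref{rem:lloyd-factor}), so \Cref{lem:indirect} gives $\mathbb{E}[d_s^*(\phi^*(T),\hat{X})]=\mathbb{E}[\ell(T,\hat{X})]\le D_s$. Hence the kernel $\phi^*(t)\mapsto\hat{x}$ is feasible for the Shannon problem on $\phi^*(T)$. Since $\hat{X}$ is a function of $\phi^*(T)$, we get $I(T;\hat{X})=I(\phi^*(T);\hat{X})\ge R(\phi^*(T);D_s)$.

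\emph{Observation side.} The kernel $t\mapsto\hat{z}$ is feasible for $R(T;D_o)$, which is a standard RDF by hypothesis, so $I(T;\hat{Z})\ge R(T;D_o)$.

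Taking the infimum over admissible $c$ gives~\eqref{eq:no-gap}. The one subtle point is that a factoring encoder is needed for \Cref{lem:indirect}. I will state explicitly that the common-category class is taken to be $\sigma(\phi^*)$-measurable, as for the Lloyd-Max encoders in \Cref{rem:lloyd-factor}. Otherwise I will note that for non-factoring $c$ one still has $\mathbb{E}[\ell]\ge\mathbb{E}[d_s^*(\phi^*(T),\mathbb{E}\ldots)]$ only under convexity, and restrict scope accordingly.

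\textbf{Equality under compatibility.} This follows essentially from the definition of compatibility. Suppose one common-category encoder $c^\circ$ attains both marginal targets and has $H(C^\circ)$ equal to the larger marginal RDF. Then $c^\circ$ is feasible for $R_{\mathrm{SK}}^{\mathrm{cc}}$, so $R_{\mathrm{SK}}^{\mathrm{cc}}\le H(C^\circ)=\max\{\cdot,\cdot\}$, which combined with the lower bound gives equality. Equivalently, the Markov chains~\eqref{eq:sk-markov} hold automatically because $\hat{x}$ and $\hat{z}$ are bijective images of each other. Then SK's Lemma~2 applies on the restricted class, and its marginals are attained by the compatible encoder.

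\textbf{Squared-oracle-error specialisation.} Set $d_o(t,c)=\|\phi^*(t)-\phi(c)\|^2$. Both restricted deterministic problems then minimise $H(C)$, or $K$, subject to a bound on the same within-category variance functional $\varepsilon(c)$ on $\phi^*(T)$. Their feasible sets are nested: the constraint at $\min\{D_s,D_o\}$ implies the one at $\max\{D_s,D_o\}$. So an encoder solving the tighter restricted problem solves both, and by \Cref{prop:decoder,prop:encoder} it may be taken to be a Lloyd-Max quantiser with conditional-mean decoder. Finally, equality with the unrestricted Shannon RDFs needs $H(C)=R(\phi^*(T);D)$. This is exactly Shannon-RDF optimality of the finite-$K$ encoder. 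For a counterexample showing that stationarity does not imply it, I will use the Gaussian case of \Cref{sec:worked-example}: there $D(R)=\sigma_\phi^2 2^{-2R}$ is strictly below the Lloyd-Max distortion at every finite $K\ge2$.

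\textbf{Main obstacle.} I expect the hardest step to be the measurability and factorisation bookkeeping in the lower bound. The SK problem~\eqref{eq:sk-rd} is posed on $z=t$, whereas the RDF on the left-hand side of the max lives on $\phi^*(T)$. Matching them requires the data-processing identity $I(T;\hat{X})=I(\phi^*(T);\hat{X})$, which in turn depends on factorisation through $\phi^*$.
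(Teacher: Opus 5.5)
Your overall structure matches the paper's: a per-encoder lower bound through the two marginals, equality because the compatible encoder is jointly feasible, coincidence of the two restricted problems under squared oracle error, and the RDF-optimality caveat. However, the semantic-side lower bound has a real gap.

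You route that step through \Cref{lem:indirect}, which forces you to assume that every common-category encoder factors through $\phi^*$. The statement's class is all deterministic $c:\mathcal{T}\to\{1,\dots,K\}$ with $\hat z=c$ and $\hat x=\phi(c)$, and it contains non-factoring encoders. Showing $H(C)\ge R(\phi^*(T);D_s)$ only on the factoring subclass bounds the infimum over that subclass. That says nothing about $R^{\mathrm{cc}}_{\mathrm{SK}}$, which is an infimum over a larger class and can only be smaller. So ``restricting scope'' proves a weaker statement, and the convexity fallback does not rescue it.

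The root cause is that you read the semantic constraint as $\mathbb{E}[\ell(T,\hat X)]\le D_s$. With $d_s=d_s^*$, the SK constraint is directly $\mathbb{E}[d_s^*(\phi^*(T),\hat X)]\le D_s$. Under your reading the worry would be justified: when $\ell(t,\cdot)$ depends on $t$ beyond $\phi^*(t)$, a non-factoring encoder can exploit that extra information.

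The missing idea is the sufficient-statistic argument the paper uses. For any encoder, factoring or not, $\mathbb{E}[d_s^*(\phi^*(T),\hat X)]$ depends only on the joint law of $(\phi^*(T),\hat X)$. The induced kernel $P_{\hat X\mid\phi^*(T)}$ (the paper's averaged kernel $\tilde q$) is therefore feasible for the Shannon problem on $\phi^*(T)$. Data processing along $\phi^*(T)-T-\hat X$ then gives $I(T;\hat X)\ge I(\phi^*(T);\hat X)\ge R(\phi^*(T);D_s)$. Only the inequality is needed, not your equality $I(T;\hat X)=I(\phi^*(T);\hat X)$. \Cref{lem:indirect} enters only afterwards, to reinterpret $D_s$ as a welfare-loss target for factoring encoders such as Lloyd-Max. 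The ``main obstacle'' you flag is therefore not an obstacle, and your observation-side and equality steps go through unchanged.

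A smaller point: your Gaussian counterexample compares against the wrong rate. RDF-optimality means $H(C)=R(\phi^*(T);D)$, so you must show $D_{\mathrm{LM}}(K)>\sigma_\phi^2\,2^{-2H(C)}$, not $D_{\mathrm{LM}}(K)>\sigma_\phi^2K^{-2}$. Gaussian Lloyd-Max cells are not equiprobable for $K\ge3$, so $H(C)<\log_2K$ and your version is the weaker claim. The case $K=2$ gives a clean instance: the cells are equiprobable, $H(C)=1$ bit, and $D_{\mathrm{LM}}=(1-2/\pi)\sigma_\phi^2\approx0.363\,\sigma_\phi^2>\sigma_\phi^2/4$. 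More generally, a finite-valued $\hat X$ with at least two values cannot realise the Gaussian backward test channel $X=\hat X+N$ with $N$ Gaussian and independent of $\hat X$.
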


\begin{proof}
By the variable mapping of \Cref{sec:mapping}, $\hat{x} = \phi(c)$ and
$\hat{z} = c$ are deterministic functions of $c = c(t)$, and with
$\phi$ injective each is a deterministic function of the other. Both
SK Markov chains~\eqref{eq:sk-markov} therefore hold automatically on
the common-category architecture.

\emph{Lower bound.} Define the marginal common-category rates
$R_{\mathrm{SK}}^{\mathrm{cc}}(D_s) := \min_{c} \{ I(z; \hat{x}) :
  \mathbb{E}[d_s^*(\phi^*(z), \hat{x})] \leq D_s,\ \hat{x} = \phi(c)
  \text{ common-category}\}$
and $R_{\mathrm{SK}}^{\mathrm{cc}}(D_o)$ analogously over the
single-distortion observation-side problem on the same encoder class.
Adding the other distortion constraint to either marginal can only
raise the minimum, so $R_{\mathrm{SK}}^{\mathrm{cc}}(D_s, D_o) \geq
\max\{R_{\mathrm{SK}}^{\mathrm{cc}}(D_s),
R_{\mathrm{SK}}^{\mathrm{cc}}(D_o)\}$. The observation-side marginal therefore satisfies
$R_{\mathrm{SK}}^{\mathrm{cc}}(D_o) \geq R(T; D_o)$, the unrestricted
Shannon rate-distortion function on $T$ at observation-distortion
target $D_o$. The common-category architecture restricts the encoder
class to deterministic finite-$K$ assignments, and class restriction
can only raise the minimum. Equality holds under the compatibility
hypothesis treated below. The semantic-side
marginal lower bound follows in two steps. (i) The unrestricted
semantic marginal RDF on $z$ at target $D_s$ equals the Shannon RDF on
$\phi^*$. Since $d_s^*(\phi^*(z), \hat{x})$ depends on $z$ only through
$\phi^*(z)$, the random variable $\phi^*(z)$ is a sufficient statistic
for $z$ with respect to the semantic distortion. For any stochastic
encoder $q(\hat{x} \mid z)$ feasible at target $D_s$, the averaged
kernel
$\tilde{q}(\hat{x} \mid \phi^*(z))
  := \mathbb{E}_{z \mid \phi^*(z)}\!\bigl[\, q(\hat{x} \mid z)\,\bigr]$
preserves the $(\phi^*(z), \hat{x})$ marginal of the induced joint, and
therefore the expected semantic distortion, while satisfying
$I_{\tilde{q}}(z; \hat{x}) = I(\phi^*(z); \hat{x})
  \leq I_{q}(z; \hat{x})$
by the data-processing inequality applied along
$\hat{x} \to z \to \phi^*(z)$. The unrestricted semantic marginal RDF
therefore equals $R(\phi^*(T); D_s)$, the Shannon rate-distortion
function on $\phi^*(T)$ at semantic-distortion target $D_s$.
(ii) The common-category deterministic class of \Cref{sec:mapping} is
a restriction of this unrestricted class, so class restriction gives
$R_{\mathrm{SK}}^{\mathrm{cc}}(D_s) \geq R(\phi^*(T); D_s)$, with
equality requiring that some deterministic common-category encoder be
Shannon-RDF-optimal at $D_s$.
\Cref{lem:indirect} then identifies
$\mathbb{E}[d_s^*(\phi^*(t), \hat{x})]$ with the expected welfare loss
$\mathbb{E}[\ell(t, \hat{x})]$, so the semantic-side target can be
expressed equivalently as a welfare-loss target or as a distortion
target on $\phi^*$-space.

\emph{Equality under compatibility.} Suppose a single common-category
encoder $c^\dagger$ realises both marginal targets simultaneously,
i.e.\ both $\mathbb{E}[d_s^*(\phi^*(z), \phi(c^\dagger))] \leq D_s$ and
$\mathbb{E}[d_o(z, c^\dagger)] \leq D_o$ at rate
$I(z; c^\dagger) = \max\{R(\phi^*(T); D_s),\, R(T; D_o)\}$. Then
$c^\dagger$ is feasible for the joint minimisation, so
$R_{\mathrm{SK}}^{\mathrm{cc}}(D_s, D_o) \leq I(z; c^\dagger) =
\max\{R(\phi^*(T); D_s),\, R(T; D_o)\}$, giving equality with the
lower bound. In the squared-oracle-error case $d_o = d_s^*$, both
marginal deterministic common-category problems coincide on
$\phi^*(T)$, so a common Lloyd-Max quantiser at the larger
restricted-class target attains both marginal distortions
simultaneously, establishing the restricted-class compatibility
hypothesis. Equality with the unrestricted Shannon rate-distortion
functions in the bound~\eqref{eq:no-gap} additionally requires that
this finite-$K$ encoder be Shannon-RDF-optimal.
\end{proof}

\begin{corollary}[Feasibility band under opposing monotonicity]\label{cor:rate-gap}
Under the opposing-monotonicity hypotheses of \Cref{prop:opposing},
with uniform category masses $\Pr(C = k) = 1/K$ (attained exactly by
balanced assignment and up to sampling fluctuations under empirical
$m_k = M/K$), and the common-category architecture $\hat{z} = c$,
$\hat{x} = \phi(c)$ of \Cref{cor:no-gap}, the SK rate specialises to
the partition-cardinality rate
$R_{\mathrm{SK}}^{\mathrm{cc}} = R_H = R_K = \log_2 K$, and the target
pair $(\varepsilon^*, \beta^*)$ is achievable at rate $R$ only if
\begin{equation}\label{eq:rate-band}
R_{\min}(\varepsilon^*) \;\leq\; R \;\leq\; R_{\max}(\beta^*),
\qquad R_{\max}(\beta^*) := \log_2 K_{\max}(\beta^*).
\end{equation}
The lower endpoint is the deterministic cardinality-rate threshold
$R_{\min}(\varepsilon^*) := \log_2 K_{\min}(\varepsilon^*)$ from
\Cref{prop:opposing}, where $K_{\min}(\varepsilon^*)$ is the smallest
$K$ at which the Lloyd-Max within-category-variance envelope
$\varepsilon^*(K)$ meets the welfare target. The Shannon
rate-distortion floor on $\phi^*(T)$ at the same target is a lower
benchmark, $R_{\mathrm{Sh}}(\varepsilon^*) \leq R_{\min}(\varepsilon^*)$,
with equality only when the finite-$K$ common-category encoder is
Shannon-RDF-optimal. The upper
endpoint is the partition-cardinality rate at the verification-driven
uniform-assignment maximum, $\log_2 K_{\max}(\beta^*)$. The two
endpoints are therefore different operational rate notions; the band
width $R_{\max} - R_{\min} = \log_2\bigl(K_{\max}/K_{\min}\bigr)$ bits
is the gap between the deterministic welfare-side cardinality endpoint
and the verification-driven cardinality ceiling, with the band
non-empty when this gap is non-negative and empty when
$R_{\min} > R_{\max}$. The band is operational on a realised
partition only when the welfare-side optimal $K$-partition also
satisfies the pool-size-vector condition of \Cref{rem:pool-vector};
uniform population assignment is the loosest such case and is the
assignment used in the scalar Gaussian worked example of
\Cref{sec:worked-example} and the smart-grid instance of
\Cref{sec:smart-grid}. Lloyd-Max optimality on the welfare side does
not by itself certify uniform assignment, so feasibility on the lower
boundary requires a separate population-assignment hypothesis.

The verification-side quantity $D_o := 1 - \beta_{\mathrm{agg}}$ is a
function of the aggregate test statistic over the pool of size $m_k$
in each category, not of a per-letter $(z, \hat{z})$ pair, so $D_o$
is not a single-letter distortion of the form admitted by SK. The
joint problem leaves the SK admissible class on this structural
ground. The rate-monotonicity violation under \Cref{prop:opposing},
namely that the mapping $R \mapsto 1 - \beta_{\mathrm{agg}}(R)$ is
non-decreasing whereas a standard rate-distortion function $R(D)$ is
non-increasing in $D$, is the downstream symptom. The
constrained-design band of the present corollary is the appropriate
characterisation in this regime.
\end{corollary}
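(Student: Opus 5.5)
The plan is to assemble the corollary from \Cref{prop:opposing}, the common-category rate identity, and a direct structural check on $D_o$, in four steps.

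\emph{Step 1: the rate identity.} Under the common-category architecture both reconstructions are deterministic functions of $c = c(t)$. With $\phi$ injective, the pair $(\hat{z}, \hat{x})$ is in bijection with $c$. Hence
\[
I(z; \hat{z}, \hat{x}) = I(t; c(t)) = H(c(t)) - H(c(t) \mid t) = H(C),
\]
since $H(C \mid t) = 0$ for a deterministic encoder. This gives $R_{\mathrm{SK}}^{\mathrm{cc}} = R_H$. Under $\Pr(C = k) = 1/K$, the entropy is maximal, so $R_H = \log_2 K = R_K$.

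\emph{Step 2: the band.} Because $R = \log_2 K$ is strictly increasing in $K$, I transport the necessary condition $K_{\min}(\varepsilon^*) \le K \le K_{\max}(\beta^*)$ of \Cref{prop:opposing} through $\log_2$. This yields \eqref{eq:rate-band} with $R_{\min} := \log_2 K_{\min}$ and $R_{\max} := \log_2 K_{\max}$. The width $\log_2(K_{\max}/K_{\min})$ then follows immediately. The band is non-empty iff $R_{\min} \le R_{\max}$, again by monotonicity of $\log_2$.

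\emph{Step 3: the Shannon benchmark.} I take a $K_{\min}$-partition attaining (or approaching) $\varepsilon^*(K_{\min}) \le \varepsilon^*$. By \Cref{lem:indirect} and \Cref{rem:lloyd-factor}, its category reconstruction is a feasible encoder for the Shannon problem on $\phi^*(T)$ at distortion $\varepsilon^*$. Its rate is
\[
H(C) \le \log_2 K_{\min} = R_{\min}.
\]
The sufficient-statistic argument of \Cref{cor:no-gap}, step (i), places $R_{\mathrm{Sh}}(\varepsilon^*)$ as the infimum over the unrestricted class. So $R_{\mathrm{Sh}}(\varepsilon^*) \le H(C) \le R_{\min}$. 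Equality forces both inequalities to be tight, which means uniformity plus RDF-optimality of the finite-$K$ encoder. The statements on pool-size vectors and on the lack of automatic balancing I will take directly from \Cref{rem:pool-vector}: nonuniform $m_{\min} < M/K$ tightens the detection cut, and Lloyd--Max centroid conditions do not constrain cell masses.

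\emph{Step 4: leaving the SK class.} I will argue that $1 - \beta_{\mathrm{agg}}$ depends on $T_k$, an average over $m_k$ agents, through $\sigma_T^2(m_k)$. Hence it depends on the population size vector, not on a pointwise pair $(z, \hat{z})$. I intend to make this rigorous by contradiction. If $D_o = \mathbb{E}[d_o(z, \hat{z})]$ for some single-letter $d_o$, then it would be determined by the joint law of $(t, c(t))$ alone. But for fixed such law, changing $M$ changes $\beta_{\mathrm{agg}}$ via \eqref{eq:beta-K}, so no single-letter $d_o$ can represent it.

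The rate-monotonicity symptom follows from \Cref{prop:opposing}(ii): $1 - \beta_{\mathrm{agg}}$ is non-decreasing in $K$, hence in $R$. This contrasts with the non-increasing $R(D)$ of any standard RDF. I expect this step to be the main obstacle, since "admissible class" must be pinned down precisely. I would first fix it as expectations of measurable per-letter distortions on $(z, \hat{z})$ and run the $M$-dependence contradiction above.
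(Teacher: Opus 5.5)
Steps 1 and 2, and the rate-monotonicity half of Step 4, follow the paper's proof essentially verbatim. Step 3 improves on it. The paper only asserts $R_{\mathrm{Sh}}(\varepsilon^*) \le R_{\min}(\varepsilon^*)$. You justify it by treating a $K_{\min}$-cell quantiser as a feasible test channel, $R_{\mathrm{Sh}}(\varepsilon^*) \le I(\phi^*(T);\hat{X}) \le H(C) \le \log_2 K_{\min}$, and you derive the equality condition correctly.

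\textbf{The gap: your Step 4 contradiction.} As written, the argument for the single-letter claim fails.
\begin{itemize}
\item The population size $M$ is a fixed constant of any given instance. SK admits any non-negative $d_o$ on $\mathcal{Z}\times\hat{\mathcal{Z}}$, and such a $d_o$ may depend on fixed problem constants. Varying $M$ therefore rules out only an $M$-independent $d_o$.
\item Under the corollary's own uniform-mass hypotheses, the conclusion cannot be obtained without more structure. Let $g(K) := 1-\beta_{\mathrm{agg}}(K)$, which is non-decreasing. Label the $K$ categories $1,\ldots,K$ with uniform masses. Take the label-only distortion $d_o(z,\hat{z}) := h(\hat{z})$ with $h(1) := g(1)$ and $h(k) := k\,g(k)-(k-1)\,g(k-1) \ge 0$. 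By telescoping, $\mathbb{E}[h(C)] = K^{-1}\sum_{k=1}^{K} h(k) = g(K)$, so this single-letter distortion reproduces $D_o$ exactly.
\end{itemize}

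\textbf{How to repair it.} You must add a requirement to the admissible class that excludes such distortions. Two choices work.
\begin{itemize}
\item \emph{Relabelling invariance.} The aggregate test ignores category names. Assume a spare label in $\hat{\mathcal{Z}}$ and non-atomic $F$. Invariance of $\int_A d_o(t,k)\,dF$ under $k\mapsto k'$, for every cell $A$ of mass $1/K$, forces $d_o(t,k)$ to be independent of $k$ almost everywhere. Then $\mathbb{E}[d_o]$ cannot vary with $K$, which contradicts the $K$-dependence of $D_o$.
\item \emph{The pool-vector setting of \Cref{rem:pool-vector}, at fixed $M$.} For a two-cell partition $\{A, A^c\}$ with $A$ labelled $1$, write $\mathbb{E}[d_o] = \kappa + \mu(A)$, where $\kappa := \int d_o(t,2)\,dF$ and $\mu(A) := \int_A \bigl(d_o(t,1)-d_o(t,2)\bigr)\,dF$ is additive in $A$. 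Meanwhile $D_o = G(\min\{F(A),1-F(A)\})$, where $G$ is the miss probability at the weakest pool; $G$ is strictly decreasing when $\sigma^2_{\mathrm{indiv}}>0$. Split $\mathcal{T}$ into quarter-mass sets $A_1,\ldots,A_4$. Compare the partitions $\{A_i, A_i^c\}$ for $i=1,2,3$, $\{A_1\cup A_2, A_3\cup A_4\}$, and $\{A_1\cup A_2\cup A_3, A_4\}$, each with its first set labelled $1$. These force $G(1/4)=G(1/2)$, a contradiction.
\end{itemize}

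The paper's own proof offers only the informal observation at this point. So the defect lies in your rigorisation, not in a departure from the paper's route.
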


\begin{proof}
Under the common-category architecture $\hat{z} = c$, $\hat{x} = \phi(c)$
with $\phi$ injective, the SK rate is
$R_{\mathrm{SK}}^{\mathrm{cc}} = I(z; \hat{z}, \hat{x}) = I(t; c) = H(C)$,
the last equality because $c$ is a deterministic function of $t$.
Under uniform population assignment, $H(C) = \log_2 K$, so
$R_{\mathrm{SK}}^{\mathrm{cc}} = R_H = R_K = \log_2 K$. By
\Cref{prop:opposing},
$K_{\min}(\varepsilon^*) \leq K \leq K_{\max}(\beta^*)$ is necessary
for joint achievability under uniform assignment, and taking $\log_2$
gives the necessary-condition band $R_{\min} \leq R \leq R_{\max}$
with $R_{\min}(\varepsilon^*) := \log_2 K_{\min}(\varepsilon^*)$ and
$R_{\max}(\beta^*) := \log_2 K_{\max}(\beta^*)$, both deterministic
cardinality endpoints of \Cref{prop:opposing}. The Shannon-RDF floor on
$\phi^*(T)$ at the same target satisfies
$R_{\mathrm{Sh}}(\varepsilon^*) \leq R_{\min}(\varepsilon^*)$, with
equality only when the finite-$K$ encoder is Shannon-RDF-optimal. The
width and emptiness characterisations restate parts of
\Cref{prop:opposing}.

The single-letter argument. By \Cref{def:agg-test}, $\beta_{\mathrm{agg}}$
is a function of the aggregate test statistic
$T_k = m_k^{-1} \sum_{i \in \mathrm{cat}_k} y_i$ over the entire pool
of size $m_k$, so $D_o := 1 - \beta_{\mathrm{agg}}$ has no
representation as a per-letter distortion measure on
$\mathcal{Z} \times \hat{\mathcal{Z}}$; the SK framework requires
$d_o(z, \hat{z})$ to be of that single-letter form, so $D_o$ falls
outside the SK admissible class. The rate-monotonicity violation is
the downstream symptom: by \Cref{prop:opposing} $\beta_{\mathrm{agg}}(K)$
decreases in $K$, so $D_o$ increases in $R$, whereas a standard
rate-distortion function $R(D)$ is non-increasing in $D$ (equivalently,
its inverse $D(R)$ is non-increasing in $R$). Hence
$R \mapsto D_o(R)$ does not satisfy the monotonicity required of an SK
marginal rate-distortion function.
\end{proof}

\paragraph{Numerical instances of the band.}
The values below report Shannon-/Gaussian-benchmark realisations of the
welfare-side endpoint of \Cref{cor:rate-gap}, evaluated using the
scalar Gaussian Shannon RDF of \Cref{sec:worked-example} and the
vector Gaussian upper bound of \Cref{sec:smart-grid}; these equal the
operational $R_{\min}$ of \Cref{cor:rate-gap} only when the finite-$K$
encoder is Shannon-RDF-optimal.
For the scalar Gaussian designed source ($d = 1$) of
\Cref{sec:worked-example} Panel~(a) in \Cref{fig:band}
($\varepsilon^* = 5 \times 10^{-3}$, $\beta^* = 0.8$),
$R_{\min}^{\mathrm{Sh}} \approx 3.82$~bits and $R_{\max} \approx 4.82$~bits,
giving a band of width
$R_{\max} - R_{\min}^{\mathrm{Sh}} \approx 1.00$~bit; equivalently
$K_{\min}^{\mathrm{Sh}} \approx 14.1$ and $K_{\max} \approx 28.3$. The
vector smart-grid instance of \Cref{sec:smart-grid} at $d = 4$ with
empirical per-coordinate $\sigma_\phi = 0.390$, applying the vector
Gaussian rate-distortion upper bound
$R_{\min}^{\mathrm{G}}(\varepsilon^*) =
\tfrac{d}{2}\log_2(d\,\sigma_\phi^2/\varepsilon^*)$ (sufficient at
$\varepsilon^*$; see \Cref{sec:smart-grid}), has
$R_{\min}^{\mathrm{G}} \approx 3.21$~bits and $R_{\max} \approx 4.82$~bits
at the same verification parameters with $\varepsilon^* = 0.20$, band
width $\approx 1.62$~bits
($K_{\min}^{\mathrm{G}} \approx 9.24$, $K_{\max} \approx 28.3$).
The SK formula, applied mechanically with $R(D_o)$ interpreted as the
smallest rate achieving the verification target, collapses to the
single value $R_{\min}^{\mathrm{Sh}}$ on the lower edge of the band: at $K = 1$
the aggregate-test power exceeds $\beta^*$ at the parameter settings of
\Cref{fig:band}, so $R(D_o) = 0$ and the welfare side dominates the
maximum. The band's full extent, and the additional bit of design
freedom on the upper side, are not surfaced. For Panel~(b) of
\Cref{fig:band} ($\varepsilon^* = 8 \times 10^{-4}$, $\sigma_\phi = 1$),
$R_{\min}^{\mathrm{Sh}} \approx 5.14$~bits exceeds $R_{\max} \approx 4.82$~bits
and the band is empty
($K_{\min}^{\mathrm{Sh}} \approx 35.4 > K_{\max} \approx 28.3$).
The SK formula returns $R_{\min}^{\mathrm{Sh}} \approx 5.14$~bits, at which
rate $\beta_{\mathrm{agg}} \approx 0.73 < \beta^*$ and the verification
target is violated; the joint infeasibility of the target pair is not
visible from $R_{\mathrm{SK}}^{\mathrm{cc}}$ alone.

\paragraph{Uniqueness of the divergence.}
\Cref{cor:no-gap,cor:rate-gap} together with \Cref{prop:opposing} pin
down the structural relationship between the SK characterisation and
the designed-source reduction along the monotonicity axis. When both
single-distortion rate-distortion functions are well-defined,
specifically when the verification-side mapping
$R \mapsto D_o(R)$ is monotone decreasing as a standard rate-distortion
function requires, \Cref{cor:no-gap} lower-bounds the common-category
SK rate by the max of the two corresponding Shannon rate-distortion
functions on $\phi^*(T)$ and on $T$, with equality under compatible
common-category reconstructions; in the squared-oracle-error case
$d_o = d_s^*$ both marginal deterministic common-category problems
coincide on $\phi^*(T)$, but equality with the unrestricted Shannon
rate-distortion functions there still requires that the finite-$K$
common-category encoder be Shannon-RDF-optimal, not merely Lloyd-Max
stationary. When the verification-side
mapping is monotone increasing in rate, as it is under aggregate
testing with pool-size-driven detection (\Cref{prop:opposing}), the
observation-side mapping is not a standard rate-distortion function,
the \Cref{cor:no-gap} bound does not apply, and the constrained-design
band of \Cref{cor:rate-gap} replaces it with an interval of width
$R_{\max} - R_{\min} = \log_2(K_{\max}/K_{\min})$ bits when non-empty
and with joint infeasibility when empty. The contribution of this paper is therefore
not a tightening of SK within its stated scope. It is the
specialisation of SK to designed-source mapping inside that scope
(\Cref{cor:no-gap}) together with the identification of the
structurally distinct regime that lies outside that scope and the
characterisation of the cardinality-level outer band there (\Cref{cor:rate-gap}).

\section{Comparison with Liu, Shao, Zhang and Poor (2022)}\label{sec:lzp}

Liu, Shao, Zhang and Poor~\cite{lszp2022} give an indirect
rate-distortion characterisation for a semantic source with an
intrinsic state and an extrinsic observation. Their model carries two
distortion measures, one between the intrinsic state and its
reproduction and one between the extrinsic observation and its
reproduction, and the semantic rate-distortion function is the solution
of a convex programming problem in an error covariance matrix. For the
Gaussian extrinsic observation under a linear state-observation
relationship and a diagonalisability condition, they give a
reverse-water-filling solution.

The structural relation to the present paper has two parts. (i) Their
Gaussian semantic rate-distortion function reduces, under the
designed-source specialisation $x = \phi^*(z)$, to standard
rate-distortion on $\phi^*(T)$ in the single-distortion welfare problem.
This is consistent with \Cref{prop:task-loss,prop:decoder,prop:encoder}.
In the scalar Gaussian instance of \Cref{sec:worked-example}, the
specialisation is concrete. The LSZP reverse-water-filling solution
collapses to the Shannon rate-distortion function
$R(D) = \tfrac{1}{2}\log_2(\sigma_\phi^2/D)$ on
$\phi^*(T) \sim \mathcal{N}(0, \sigma_\phi^2)$, which is the rate
function~\eqref{eq:rdf} entering $R_{\min}^{\mathrm{Sh}}$ in~\eqref{eq:R-min}.
(ii) Their framework, like SK's, does not surface the
opposing-monotonicity feasibility band of \Cref{prop:opposing}. The
reason is structural, not a difference in the number of distortion
measures. The LSZP second distortion is on the extrinsic observation,
where adding rate reduces both distortion components, with no
aggregate-detection test where pooling matters.
\Cref{prop:opposing} requires a verification side-constraint in which
the rate variable controls pool size, so finer categories degrade the
aggregate test. This is the structural feature that designed-source
dual-purpose signalling exposes and that LSZP's intrinsic-extrinsic
indirect rate-distortion framework does not capture.

\paragraph{Locating agreement and divergence via \Cref{cor:no-gap}.}
The structural relation above can be made formal. In the standard
regime where the verification-side mapping is monotone non-increasing
in rate, \Cref{cor:no-gap} lower-bounds the common-category SK
rate~\eqref{eq:sk-rd} by $\max\{R(\phi^*(T); D_s),\, R(T; D_o)\}$ and
attains it under compatible reconstructions, and the LSZP indirect
characterisation, in the scalar Gaussian instance of
\Cref{sec:worked-example}, reduces by (i) to the same pair of
Shannon rate-distortion functions on $\phi^*(T)$ and on $T$
respectively. The three characterisations (SK, LSZP, designed-source
reduction) therefore agree on the Shannon-RDF lower-bound structure
in this regime, and agree exactly when the compatible common-category
reconstruction is also RDF-optimal. The present
reduction does not claim a tightening of LSZP within its stated
scope. The divergence considered here arises in the regime in which
the verification-side mapping is not monotone non-increasing in rate,
where welfare and detection are driven in opposite directions by
the same rate variable. Neither SK nor LSZP surfaces the
feasibility band of \Cref{prop:opposing} there, for the structural
reason recorded in (ii).

\section{Numerical worked example: smart-grid dispatch with NTL contrast}\label{sec:numerical}

This section instantiates the reduction on a comms-side worked example.
\Cref{sec:smart-grid} runs the designed-source side on smart-grid
economic dispatch and illustrates that the two SK distortions
trace a single curve up to the
\Cref{prop:task-loss} sandwich. \Cref{sec:ntl-contrast} runs an
unstructured-source contrast on a non-technical-loss (NTL)
detection problem~\cite{leite2018}, where no oracle action exists and
the $(D_s, D_o)$ frontier does not reduce in this way. Both experiments use
numpy/scipy with Lloyd--Max via $K$-means restarts and Nelder--Mead
partition optimisation; the experiment source and JSON-archived
results will be made available upon publication.
Every empirical curve below carries a 95\% percentile confidence
interval from a 200-iteration parametric bootstrap that redraws the
source per iteration and re-runs the full pipeline (codebook fit plus
$\varepsilon(K)$, $\Delta(K)$, or $D_s, D_o$ as appropriate). The
shaded ribbons and error bars on each figure are these CIs.

The dispatch-plus-fraud-detection pairing is not contrived. In a
smart-grid distribution feeder, the meter-level consumption signal is
a single dual-purpose telemetry channel: it feeds the dispatch
optimiser on the designed-source side and the NTL detector on the
verification side at the same time, and the operator's choice of how
many categories $K$ to provision is forced to balance both constraints
against one shared signal. This is exactly the structure
\Cref{prop:opposing} characterises.

\subsection{Smart-grid economic dispatch}\label{sec:smart-grid}

A grid operator dispatches across $N = 4$ time slots against
stochastic per-slot loads $t \in \mathbb{R}^4$, with utility
\[
U(t, r) \;=\; -\tfrac{1}{2}\sum_{i=1}^4 a_i \,(r_i - t_i)^2.
\]
The tracking weights $a_i$ are drawn once from $\mathrm{Uniform}[0.5,
2.0]$ to make the curvature sandwich \Cref{prop:task-loss} non-trivial;
the realised draw gives $\alpha = 0.764$ and $\beta_U = 1.907$, so
$[\alpha/2,\,\beta_U/2] = [0.382,\,0.953]$. The diagonal weight matrix
$W = \mathrm{diag}(a_i)$ is non-isotropic ($\alpha < \beta_U$), so the
Euclidean sandwich~\eqref{eq:sandwich} is not tight on this instance.
The weights are nonetheless fixed across samples, so $W$ is constant
and the conditional-mean decoder of \Cref{prop:decoder} remains exact
for the weighted-quadratic welfare loss. \Cref{fig:smart-grid}\,(a)
records the empirical $\varepsilon(K)$ vs $\Delta(K)$ ratio under this
exact-welfare decoder. The oracle action is
$\phi^*(t) = t$ and welfare is normalised so $U(t, \phi^*(t)) = 0$. We
draw $n_{\mathrm{samples}} = 2 \times 10^4$ truncated-normal load
vectors and, for each $K \in \{2, 4, 8, 16, 32, 64, 128\}$, train the
vector Lloyd--Max quantiser of \Cref{prop:encoder} on $\phi^*(t)$
(\texttt{scipy.cluster.vq.kmeans2}, 20 random restarts). For each $K$
we record the semantic distortion $\varepsilon(K)$ and the welfare loss
$\Delta(K)$.

\Cref{fig:smart-grid}\,(a) plots $\Delta(K)$ against $\varepsilon(K)$
on log--log axes. The measured operating points fall inside the
sandwich envelope at every $K$. The empirical ratio
$\Delta(K) / \varepsilon(K)$ ranges from $0.635$ at $K = 4$ to $0.687$
at $K = 2$, well within $[0.382, 0.953]$. The two SK distortions are
therefore proportional up to a known constant, and the rate-distortion
surface $R(D_s, D_o)$ degenerates to a single curve $R(\varepsilon)$.

\Cref{fig:smart-grid}\,(b) instantiates the feasibility band of
\Cref{prop:opposing} on the same dispatch instance under a meter-level
fraud-detection side-constraint with $M = 200$ meters across the
population (so $m_k = M/K$ per category),
$\sigma_{\mathrm{temp}}^2 = 0.02$, $\sigma_{\mathrm{indiv}}^2 = 1.0$,
effect size $\delta = 1$, false-alarm level
$\alpha_{\mathrm{FA}} = 0.05$, and target power $\beta^* = 0.80$. The
empirical per-coordinate rms dispersion of $\phi^*$ is
$\sigma_\phi = 0.390$ at $d = N = 4$ (bootstrap 95\% CI
$[0.389, 0.393]$), so the vector Gaussian Shannon
rate-distortion function~\cite[Theorem~10.3.3]{coverthomas2006} gives
the Gaussian benchmark
$R_{\min}^{\mathrm{G}}(\varepsilon^*) := R_{\mathrm{Sh}}^{\mathrm{G}}(\varepsilon^*) =
\tfrac{d}{2}\log_2(d\,\sigma_\phi^2/\varepsilon^*)$, equivalently
$K_{\min}^{\mathrm{G}}(\varepsilon^*) = (d\,\sigma_\phi^2/\varepsilon^*)^{d/2}$
under the uniform-assignment correspondence $R_H = R_K = \log_2 K$.
The Cover--Thomas expression is the rate-distortion function of a
Gaussian source. Since the smart-grid load is truncated normal and the
Gaussian maximises differential entropy at fixed variance,
$R_{\min}^{\mathrm{G}}(\varepsilon^*)$ is a tractable upper bound on the
source's Shannon rate-distortion floor. It is therefore a sufficient
Shannon-coding benchmark, not a certificate that a finite-$K$
deterministic Lloyd-Max partition attains the operational endpoint
$R_{\min}(\varepsilon^*)$ of \Cref{cor:rate-gap}. The
$d = 1$ welfare-side expressions of \Cref{sec:worked-example} are
specialisations of these. Setting $\varepsilon^* = 0.20$ yields
$R_{\min}^{\mathrm{G}} \approx 3.21$~bits and $R_{\max} = \log_2 K_{\max} \approx
4.82$~bits, so $K_{\min}^{\mathrm{G}} \approx 9.24$ (bootstrap 95\% CI
$[9.17, 9.52]$) is less than $K_{\max} \approx 28.3$ and the band is
non-empty (shaded). Tightening the welfare target to
$\varepsilon^* = 0.05$ moves $R_{\min}^{\mathrm{G}}$ to $7.21$~bits
($K_{\min}^{\mathrm{G}} \approx 148.0 > K_{\max}$) and the band collapses. The numerical
picture qualitatively tracks the analytical band of
\Cref{sec:worked-example} on a non-Gaussian designed source,
illustrating that the band structure is generic to designed mechanisms
with verification side-constraints and not an artefact of the scalar
Gaussian instance.

\begin{figure}[t]
\centering
\includegraphics[width=\linewidth]{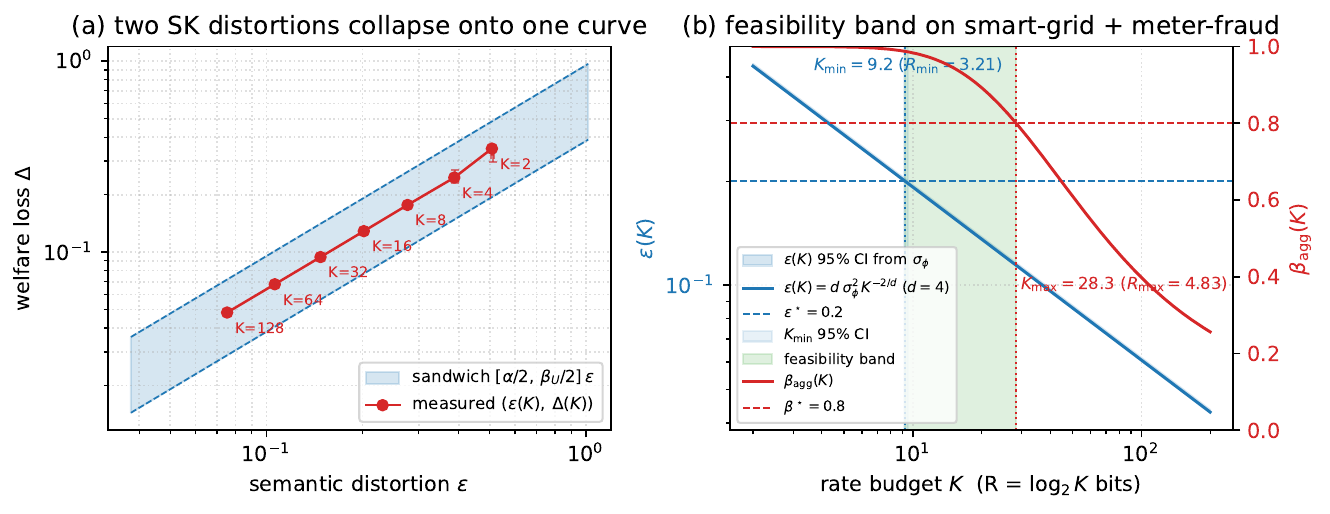}
\caption{Smart-grid economic dispatch worked example. (a) The two SK
distortions $(\varepsilon, \Delta)$ trace a single curve at every $K$,
with the empirical ratio inside the \Cref{prop:task-loss} sandwich
envelope $[\alpha/2,\, \beta_U/2]$. (b) The feasibility band of
\Cref{prop:opposing} on the same dispatch instance under a meter-level
fraud-detection side-constraint, with $\beta_{\mathrm{agg}}(K)$
computed under the uniform-assignment envelope $m_k = M/K$. Vertical
dotted lines locate $K_{\min}^{\mathrm{G}} \approx 9.24$ and
$K_{\max} \approx 28.3$ at $\varepsilon^* = 0.20$.}
\label{fig:smart-grid}
\end{figure}

\Cref{tab:pool-mass} records the realised Lloyd--Max partition statistics
on the same dispatch instance, quantifying the
\Cref{rem:pool-vector} gap between the cardinality-rate envelope
$R_K = \log_2 K$ used in panel~(b) and the operational quantities
$H(C)$ and $m_{\min}$ on the realised partition at $M = 200$. The
entropy gap $\log_2 K - H(C)$ is below $0.085$~bits through $K = 128$,
so $R_H$ tracks $R_K$ closely on this instance and the uniform-assignment
correspondence $R_H = R_K$ used in \Cref{fig:smart-grid}\,(b) is a tight
envelope for entropy. The pool-mass gap is sharper: the deterministic
floor proxy $m_{\min} = \lfloor M \min_k p_k \rfloor$ falls below the
uniform envelope $M/K$ from $K = 16$ onward, and reaches zero at
$K = 128$, meaning that the smallest empirical category has expected
mass below one candidate at $M = 200$. The detection-side endpoint
$\beta_{\mathrm{agg}}$ is then governed by the weakest realised pool
rather than by partition cardinality alone. The cardinality outer band of
\Cref{fig:smart-grid}\,(b) is therefore not tight on the realised
partition above moderate $K$; an operational tightening of the
detection endpoint would index $\beta_{\mathrm{agg}}$ on
$\min_k m_k$ rather than on $\log_2 K$, at the cost of dependence on
the partition-specific pool vector.

\begin{table}[t]
\caption{Lloyd--Max partition statistics on the smart-grid dispatch
instance, pool size $M = 200$, sample size
$n_{\mathrm{samples}} = 2 \times 10^4$. $\min_k p_k$ and $\max_k p_k$
are the smallest and largest category sample fractions, $H(C)$ is the
empirical category entropy in bits, and
$m_{\min} = \lfloor M \cdot \min_k p_k \rfloor$ is a deterministic
proxy for the smallest pool count at $M = 200$. $M/K$ is the
uniform-assignment envelope on $m_{\min}$, used as the detection-side
cardinality bound in \Cref{fig:smart-grid}\,(b).}
\label{tab:pool-mass}
\centering
\begin{tabular}{@{}rrrrrrr@{}}
\toprule
$K$ & $\min_k p_k$ & $\max_k p_k$ & $H(C)$ & $\log_2 K$ & $m_{\min}$ & $M/K$ \\
\midrule
2   & 0.4904 & 0.5096 & 0.9997 & 1 & 98 & 100.00 \\
4   & 0.2389 & 0.2601 & 1.9992 & 2 & 47 & 50.00 \\
8   & 0.1210 & 0.1285 & 2.9997 & 3 & 24 & 25.00 \\
16  & 0.0540 & 0.0906 & 3.9880 & 4 & 10 & 12.50 \\
32  & 0.0204 & 0.0566 & 4.9593 & 5 & 4  & 6.25 \\
64  & 0.0086 & 0.0270 & 5.9344 & 6 & 1  & 3.13 \\
128 & 0.0030 & 0.0162 & 6.9166 & 7 & 0  & 1.56 \\
\bottomrule
\end{tabular}
\end{table}

\subsection{NTL-detection contrast}\label{sec:ntl-contrast}

The reduction relies on the determinism $x = \phi^*(z)$. We confirm
that it does \emph{not} extend to unstructured-source problems by
running a contrast on the NTL setting of Leite and
Mantovani~\cite{leite2018}, where each meter is legitimate with
probability $1 - \pi$ and fraudulent with probability $\pi$, the fraud
indicator $x \in \{0, 1\}$ is latent, and only the noisy meter reading
$z \in \mathbb{R}$ is observed. We use synthetic Gaussian-shift data,
$z \mid x = 0 \sim \mathcal{N}(\mu_L,\, \sigma^2)$ and $z \mid x = 1
\sim \mathcal{N}(\mu_F,\, \sigma^2)$, with
$(\mu_L, \mu_F, \sigma, \pi) = (1.0,\, 0.0,\, 0.5,\, 0.10)$. The
closed-form Bayes risk is $0.0701$. Two SK-style distortions are
evaluated on $K$-interval partitions of $z$ with per-cell centroid
decoder $\hat{z}(c) = \mathbb{E}[z \mid c]$ and per-cell MAP fraud
verdict $\hat{x}(c) = \arg\max_y \Pr(x = y \mid c)$:
\[
D_o \;=\; \mathbb{E}\!\bigl[(z - \hat{z}(c))^2\bigr], \qquad
D_s \;=\; \Pr\bigl[x \neq \hat{x}(c)\bigr].
\]
There is no oracle action: $x$ is genuinely stochastic conditional on
$z$ because the Gaussian-shift channel carries irreducible Bayes risk.
For each $K \in \{4, 8, 16\}$ we sweep a Pareto weight $\lambda \in
[0, 1]$ and optimise the cuts under
$\lambda \tilde{D}_o + (1 - \lambda) D_s$ (with $\tilde{D}_o$ scaled by
the quantile-cut $D_o$ for numerical conditioning), using
\texttt{scipy.optimize.minimize} with Nelder--Mead and warm-started
cuts.

\Cref{fig:ntl} plots the resulting $(D_o, D_s)$ Pareto frontiers. The
three frontiers do not overlap and do not collapse: at $K = 8$, $D_o$
varies from $0.012$ to $0.019$ as $\lambda$ sweeps, while $D_s$ varies
from $0.071$ to $0.078$, and at $K = 16$ the $D_o$ range is
$[0.003,\, 0.009]$ with $D_s$ pinned near the Bayes lower bound. The
$D_s$ achievable from any partition is bounded below by the Bayes
risk, and that bound is reached only when the cuts are aligned with
the likelihood-ratio threshold rather than with the centroid condition
that minimises $D_o$. The cuts that minimise $D_o$ and the cuts that
minimise $D_s$ are distinct at every $K$, so the joint
rate-distortion surface $R(D_s, D_o)$ does not factor through either
distortion alone. SK's two-distortion machinery, the
exponential-tilting parametric decoder, and the generalised
Blahut--Arimoto iteration are the appropriate tools in this regime.
The four-proposition reduction of this paper does not apply.

\begin{figure}[t]
\centering
\includegraphics[width=0.75\linewidth]{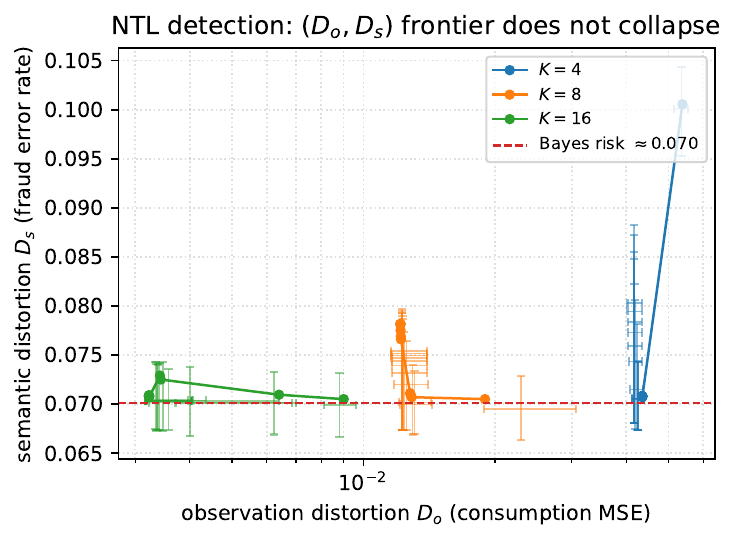}
\caption{NTL detection on a Gaussian-shift unstructured source. For
each $K \in \{4, 8, 16\}$, the $(D_o, D_s)$ Pareto frontier is traced
by sweeping the partition under a weighted objective. The three
frontiers are distinct and non-overlapping, and at every $K$ the
$D_o$-minimal partition differs from the $D_s$-minimal partition.
The dashed horizontal line marks the closed-form Bayes risk
$\approx 0.070$. Per-point error bars are 95\% percentile CIs from
200 parametric bootstrap iterations.}
\label{fig:ntl}
\end{figure}

\section{Discussion}\label{sec:discussion}

\Cref{prop:task-loss,prop:decoder,prop:encoder,prop:opposing} make the
exponential-tilting decoder and the generalised Blahut--Arimoto
algorithm of~\cite{sk2023} unnecessary for welfare-side design within
the designed-source subclass, and \Cref{cor:no-gap} shows that SK's
Markov-chain hypotheses follow from the designed-source variable
mapping rather than needing to be assumed.

The reduction holds because the designed-source class specialises the
SK setting in one structural way: the semantic source $x = \phi^*(z)$
is a deterministic function chosen by the designer, not a stochastic
quantity given by nature. This makes the squared-oracle-error
decoder the conditional mean, exact for quadratic utility (a
definition), the squared-oracle-error encoder Lloyd-Max (a definition),
and the task loss a quadratic sandwich (Taylor's theorem with a zero
linear term).

\Cref{prop:opposing} identifies an additional feature that emerges from
the specialisation. In the standard regime where both distortions
decrease with rate, \Cref{cor:no-gap} lower-bounds the common-category
SK rate by $\max\{R(\phi^*(T); D_s),\, R(T; D_o)\}$, a max of two
standard Shannon rate-distortion functions, and attains it under
compatible common-category reconstructions; the squared-oracle-error
case used in the smart-grid illustration realises compatibility at
the restricted deterministic common-category level, while equality
with the unrestricted Shannon RDFs additionally requires RDF
optimality. For designed mechanisms whose system-level
partition is load-bearing for both allocation and aggregate
verification~\cite{shlezinger2019,strinati2021}, welfare improves with
finer categories while detection degrades, so the two design targets
oppose each other in the rate variable. The joint problem is then not
a standard two-distortion rate-distortion problem: the pool-size
coupling enters as an external constraint on the admissible-encoder
class, and the cardinality-level outer band is the constrained-design
band~\eqref{eq:band} rather than a maximum. This does not
contradict~\cite{sk2023} within its stated stochastic-source scope;
rather, it identifies a regime where the designed-source-mapping
pre-image of the SK setting yields a strictly different joint
structure. Opposing monotonicity arises naturally in any modular
system that commits to a single categorization for both coordination
and verification.
\Cref{cor:no-gap,cor:rate-gap}
together with \Cref{prop:opposing} sharpen this picture: when the
verification-side mapping is monotone decreasing in rate, the SK
characterisation specialises by \Cref{cor:no-gap} and the three
frameworks (SK, LSZP, designed-source reduction) agree on the
Shannon-RDF lower-bound structure, with exact agreement under
RDF-optimal common-category reconstructions; when
it is monotone increasing, the joint problem sits outside the SK
admissible class and the cardinality-level outer band is the
constrained-design band of \Cref{cor:rate-gap}.

By \Cref{prop:task-loss,prop:decoder,prop:encoder}, the
resource-allocation and power-scheduling examples used to motivate
goal-oriented quantization~\cite{zou2023,sun2024} are designed
mechanisms whose single-distortion optimisation, restricted to the
deterministic common-category encoder class of \Cref{sec:mapping},
reduces to Lloyd-Max on the oracle allocation. A designer in the same family of problems, for instance a
grid operator running economic dispatch over a distribution network who
additionally requires a verification constraint such as meter-level
fraud detection, would naturally reach for the SK dual-distortion
framework and inherit its full machinery. The reduction above shows
that this designer need not solve the exponential-tilting or
Blahut-Arimoto problems at all, and that the dual constraint takes the
form of a feasibility band (\Cref{prop:opposing}) rather than
$\max\{R(D_s), R(D_o)\}$.

\subsection{Relation to information bottleneck and indirect
rate-distortion}\label{sec:ib-relation}

The information bottleneck of Tishby, Pereira and
Bialek~\cite{tishby1999ib} compresses a source $X$ to a representation
$T$ that preserves information $I(T; Y)$ about a relevance variable
$Y$. The designed-source setting maps to IB by identifying $X = t$,
$T = c(t)$, $Y = \phi^*(t)$, with one structural specialisation:
$Y = \phi^*(X)$ is a deterministic function of $X$ rather than
stochastic. The deterministic identity replaces the entropy-based IB
relevance criterion with the $L^2$ within-category-variance criterion
$\varepsilon$ on $\phi^*(t)$ that \Cref{prop:encoder} minimises by
Lloyd-Max and that \Cref{prop:task-loss} maps welfare loss into. The
deterministic IB of Strouse and Schwab~\cite{strouse2017dib} replaces
$I(X; T)$ with $H(T)$ to enforce deterministic encoders, and in a
parameter limit of their model DIB clustering becomes equivalent to
$K$-means~\cite{strouse2019ib}. \Cref{prop:encoder}'s Lloyd-Max
encoder instantiates this $K$-means limit on the transformed samples
$\{\phi^*(t_i)\}$ rather than on the raw types $\{t_i\}$, and the
IB-deep-learning phase-transition behaviour of~\cite{tishby2015ib}
(see the IB-machine-learning survey~\cite{goldfeld2020ib}) does not
arise because the compression-relevance curve degenerates to the
single-knob Shannon rate-distortion curve~\eqref{eq:rdf} at every
$K$.

The indirect rate-distortion line is the closer
rate-distortion-theoretic neighbour. The classical formulations of
Dobrushin and Tsybakov~\cite{dobrushin1962} and of Wolf and
Ziv~\cite{wolf1970} encode a hidden source observed through noise;
Witsenhausen~\cite{witsenhausen1980} generalises these by allowing an
arbitrary fidelity criterion between the source and the reconstruction,
and provides the source-coding construction underlying
\Cref{lem:indirect}. The semantic-source variant of Liu, Shao,
Zhang and Poor~\cite{lszp2022} is discussed in \Cref{sec:lzp}.
\Cref{cor:no-gap} lower-bounds the common-category SK rate by the max
of two standard Shannon rate-distortion functions on $\phi^*(T)$ and
on $T$, with equality under compatible common-category reconstructions,
and pins down the conditions under which rate-agreement across SK,
LSZP, and the present reduction holds in that regime. The contribution of the present paper sits on both sides of
the monotonicity axis: \Cref{cor:no-gap} specialises SK inside its
stated scope, and \Cref{prop:opposing,cor:rate-gap} characterise the
feasibility band in the regime outside it.

\subsection{Scope of the reduction}\label{sec:scope}

The reduction is specific to the designed-source subclass. It does
\emph{not} apply to the following adjacent settings.

\begin{itemize}
\item \textbf{General stochastic-source SK.} When the semantic source
  is not directly observable and must be inferred from noisy
  observations~\cite{sk2023}, the full SK machinery is needed: there
  is no oracle action $\phi^*(z)$ to take expectations against, the
  decoder is not a conditional mean of a deterministic function of
  $z$, and Lloyd-Max on $\phi^*$ is undefined. The four propositions
  presented here all turn on the deterministic identity
  $x = \phi^*(z)$.
\item \textbf{Adversarial or strategic SK variants.} When the agents
  whose types $t$ generate the observations have incentives that
  depend on the encoding scheme, the oracle allocation may not be
  designer-implementable without further mechanism-design conditions.
  This layer is out of scope for the present paper.
\item \textbf{Non-smooth or non-concave utility.} A1 and A2 fail when
  $U(t, \cdot)$ is non-smooth (e.g.\ piecewise-linear allocations
  with hard boundaries) or non-concave (e.g.\ multi-modal utility
  arising in non-convex resource problems). The curvature sandwich
  in \Cref{prop:task-loss} no longer applies.
\item \textbf{Non-Euclidean codomain.} The MMSE-based decoder
  \Cref{prop:decoder} and the Lloyd-Max encoder \Cref{prop:encoder}
  both rely on a Euclidean codomain $\mathbb{R}^r$ for the
  allocation. Mechanisms whose actions live in discrete, ordinal, or
  manifold-valued spaces require a different reduction or no
  reduction at all.
\end{itemize}

The contribution of this paper is therefore narrow by construction.
The designed-source subclass is a practically important subclass: it
contains the resource-allocation and power-scheduling problems used to
motivate goal-oriented quantization~\cite{zou2023,sun2024}, and any
designer-chosen allocation rule under smooth concave utility. Within
this subclass, the SK apparatus specialises to the short statements
of \Cref{prop:task-loss,prop:decoder,prop:encoder}, and the additional
feasibility-band structure becomes visible. Outside this subclass, the
SK framework and its full machinery remain necessary.

\bibliographystyle{IEEEtran}
\bibliography{refs}

\end{document}